\newtheorem{theorem}{Theorem}[section]
\newtheorem{lemma}[theorem]{Lemma}
\newtheorem{remark}[theorem]{Remark}
\newtheorem{proposition}[theorem]{Proposition}
\newenvironment{model}[1]
  {\innermodel}
  {\endinnermodel}
\newcommand\bah[1]{\textcolor{black}{#1}}
\newcommand{\integers}{\mathbb{Z}}
\newcommand{\integerspositive}{\mathbb{Z}_{\geq 1}}
\newcommand{\oprocendsymbol}{\hbox{$\bullet$}}
\newcommand{\oprocend}{\relax\ifmmode\else\unskip\hfill\fi\oprocendsymbol}
\DeclareMathOperator*{\argmin}{arg\,min}
\DeclareMathOperator*{\argmax}{arg\,max}
\newcommand{\longthmtitle}[1]{\mbox{}\textup{\bf (#1):}}
\newcommand{\bZ}[1]{\bar{Z}_{#1}}
\newcommand{\Zprocess}[1]{\{Z_{#1}\}_{n=1}^{\infty}}
\newcommand{\delp}{\delta'}
\newcommand{\dels}{\delta^{\star}}
\newcommand{\dhat}{\hat{\delta}}
\newcommand{\Nsum}{\sum_{i=1}^N}
\newcommand{\tsum}{\sum_{t=1}^n}
\newcommand{\bR}{\bar{R}}
\newcommand{\bB}{\bar{B}}
\newcommand{\bT}{\bar{T}}
\newcommand{\Betafun}{\mathsf{Beta}}
\newcommand{\Polya}{\mathsf{Polya}}
\newcommand{\F}{\mathcal{F}}
\newcommand{\N}{\mathcal{N}}
\newcommand{\G}{{\mathcal{G}}}
\newcommand{\E}{{\mathcal{E}}}
\newcommand{\dsis}{\delta_{SIS}}
\newcommand{\bsis}{\beta_{SIS}}
\begin{document}

\title{A Polya Contagion Model for Networks}

\author{Mikhail Hayhoe, Fady Alajaji, \textit{Senior Member, IEEE}, and Bahman Gharesifard, \textit{Member, IEEE}
\thanks{
Mikhail Hayhoe is with the Department of Electrical and Systems Engineering at the University of Pennsylvania, Philadelphia, PA, USA \texttt{mhayhoe@seas.upenn.edu}.

Fady Alajaji and Bahman Gharesifard are with the Department of Mathematics and Statistics at Queen's University, Kingston, ON, Canada,~\texttt{\{fa, ghareb\}@queensu.ca}.

This work was partially supported by the Natural Sciences and Engineering Research Council of Canada. Parts of this work were presented at the 2017 American Control Conference~\cite{MH-FA-BG:17}.}
}
\maketitle

\begin{abstract}
A network epidemics model based on the classical Polya urn scheme is investigated. Temporal contagion processes are generated on the network nodes using a modified Polya sampling scheme that accounts for spatial infection among neighbouring nodes. The stochastic properties and the asymptotic behaviour of the resulting network contagion process are analyzed. Unlike the classical Polya process, the network process is noted to be non-stationary in general, although it is shown to be time-invariant in its first and some of its second-order statistics and to satisfy martingale convergence properties under certain conditions. Three classical Polya processes, one computational and two analytical, are proposed to statistically approximate the contagion process of each node, showing a good fit for a range of system parameters. \bah{Finally, empirical results compare and contrast our model with the well-known discrete time SIS model.}
\\

\emph{Index terms}---Polya contagion networks, epidemics on networks, non-stationary stochastic processes, martingales.
\end{abstract}

\section{Introduction}\label{sec:introduction}

In this paper we examine the dynamics and properties of a contagion process, or $\emph{epidemic}$, on a network. Here an epidemic can represent a disease~\cite{LK-MA-KD-SK-AO:14}, a computer virus~\cite{MG-WG-DT:03}, the spread of an innovation, rumour or idea~\cite{ER:03}, or the dynamics of competing opinions in a social network~\cite{EA-LAA:05}.

Many different models for the study of infection propagation and curing exist in the literature. Our model\bah{, the network Polya contagion process, bears similarities to} the well-known susceptible-infected-susceptible \bah{(SIS)} infection model~\cite{DE-JK:10}. In this model, all nodes may initially be healthy or infected. As the epidemic spreads, nodes that are infected can be cured to become healthy, but any healthy node may become infected at any time, regardless of whether they have been cured previously. \bah{Epidemics on networks have been intensively studied in recent years; see~\cite{PVM-JO-RK:09} and references therein and thereafter. The model that we present is an adaptation of the classical Polya contagion process~\cite{GP-FE:23,GP-FE:28,GP:30} to a network setting by accounting for spatial infection between nodes. The classical Polya model has been used to study a variety of epidemics such as the bubonic plague in Peru~\cite{AR:40} and the spread of chlamydia in a closed population~\cite{CM-LA-MS:96}, as well as a wide range of other applications; see ~\cite{RP:07} for a survey.} In this work we will examine the stochastic evolution of the network Polya contagion process. 

Our model is motivated by the classical Polya contagion process, which evolves by sampling from an urn containing a finite number of red and black balls~\cite{GP-FE:23,GP-FE:28,GP:30}. In the network Polya contagion model, each node of the underlying network is equipped with an individual urn; however, instead of sampling from these urns when generating its contagion process, each node has a ``super urn'', created by combining the contents of its own urn with those of its neighbours' urns. This adaptation captures the concept of spatial infection, since having infected neighbours increases the chance that an individual is infected in the future. This concept of the super urn sampling mechanism for incorporating spatial interactions was originally introduced in~\cite{AB-PB-FA:99} in the context of the image segmentation and labeling problem. We herein adapt the image model of~\cite{AB-PB-FA:99} for a network setting and analyze the resulting contagion process affecting each node of the network.         

More specifically, we study the time evolution and stochastic properties of the proposed network contagion process. We derive an expression for the temporal $n$-fold joint probability distribution of the process. We show that this process, unlike the classical Polya urn process, is in general non-stationary, and hence not exchangeable. For the special case of complete networks, we analytically find the $1$-dimensional and $2$-dimensional $(n,1)$-step marginal distributions of the contagion process. These results show that, even though it is not stationary, the process \bah{in this case} is nevertheless identically distributed with its later two marginal distributions being invariant to time shifts. We also establish several martingale properties regarding the network urn compositions, proving that the proportions of red balls in each node's urn as well as the network average urn proportion converge almost surely to a limit as time grows without bound. We next provide three approximations to the network contagion process by modelling each node's contagion process via a classical stationary Polya process~\cite{GP:30}. In the first one, we approximate each node's process with the classical Polya process whose correlation parameter is empirically selected so that the Kullback-Leibler divergence measure between its $n$-fold joint distribution and that of the original node process is minimized. In the second approximation, we propose an analytical model whose parameters are chosen by matching its first and $(n,1)$-step second-order statistics with those of the original node process, which fits well for large networks. The last approximation uses a classical Polya model with parameters chosen analytically that we show fits well for small networks. Finally, simulation results are presented to support the validity of these approximations \bah{and to compare our model with the traditional discrete time SIS model, which suggests that the network Polya contagion process captures certain properties of the SIS model, while offering new insights in the case of widespread infection.}

The rest of the paper is organized as follows. Section~\ref{sec:preliminaries} outlines some preliminary knowledge that will be used throughout the paper. Section~\ref{sec:model} introduces the network contagion process, and Section ~\ref{sec:stochastic_properties} presents its stochastic properties and asymptotic behaviour. Section~\ref{sec:model_approximations}  proposes three approximations for the individual node contagion processes in the network, along with numerical modelling results. Lastly, Section~\ref{sec:conclusion} concludes the paper.

\section{Preliminaries}\label{sec:preliminaries}

For a sequence $v_i = (v_{i,1},...,v_{i,n})$, we use the notation $v_{i,s}^t$ with $1 \leq s < t \leq n$ to denote the vector $(v_{i,s},v_{i,s+1},...,v_{i,t})$. Our technical results rely on notions from stochastic processes, some of which we recall here. Throughout, we assume that the reader is familiar with basic notions of probability theory. 

Let $ (\Omega,\F,P) $ be a probability space, and consider the stochastic process $\Zprocess{n}$, where each
$ Z_n $ is a random variable on $\Omega$. We often refer to the indices of the process as ``time'' indices. We recall that the process $\Zprocess{n}$ is \emph{stationary} if for any $n\in \integers_{\geq 1}$, its $n$-fold joint probability distribution (i.e., the distribution of $(Z_1,...,Z_n)$) is invariant to time shifts. Further, $\Zprocess{n}$ is \emph{exchangeable} if for any $n\in \integers_{\geq 1}$, its $n$-fold joint distribution is invariant to permutations of the indices $1,...,n$. It directly follows from the definitions that an exchangeable process is stationary. Lastly, the process $\Zprocess{n}$ is called a \emph{martingale} (resp. \emph{supermartingale}, \emph{submartingale}) with respect to the process $ \{Y_n\}_{n=1}^{\infty} $ if $E[|Z_n|] < \infty$ and $E[Z_{n+1} | Y_{n}] = Z_{n}$ almost surely (resp. less than or equal to, greater than or equal to), for all $n$. Precise definitions of all notions, including that of {\em ergodicity}, can be found in standard texts (e.g.,~\cite{RA-CD:00,GG-DS:01}).

We now recall the classical version of the Polya contagion process~\cite{GP-FE:23,GP:30}. Consider an urn with $R \in \integers_{> 0} $ red balls and $B \in \integers_{> 0}$ black balls. We denote the total number of balls by $ T $, i.e., $T = R + B$. At each time step, a ball is drawn from the urn. The ball is then returned along with $\Delta\ > 0$ balls of the same colour. We use an indicator $Z_n$ to denote the colour of ball in the $n$th draw:
\begin{align*}
	Z_n = \begin{cases}
		1 &\text{if the $n$th draw is red}\\
		0 &\text{if the $n$th draw is black.}
	\end{cases}
\end{align*}
Let $U_n$ denote the proportion of red balls in the urn after the $n$th draw. Then
\begin{align*}
	U_n &:= \frac{R + \Delta\tsum Z_{t}}{T + n\Delta} = \frac{\rho_c + \delta_c\tsum Z_{t}}{1+n\delta_c},
\end{align*}
where $\rho_c = \frac{R}{T}$ is the initial proportion of red balls in the urn and $\delta_c = \frac{\Delta}{T}$ is a correlation parameter. Since we draw balls from this urn at each time step, the conditional probability of drawing a red ball at time $n$, given $Z^{n-1}=(Z_1,\cdots,Z_{n-1})$, is given by
\begin{align*}
	P(Z_n = 1 \ | \ Z^{n-1} ) &= \frac{R + \Delta\sum_{t=1}^{n-1} Z_{t}}{T + (n-1)\Delta} = U_{n-1}.
\end{align*}
It can be easily shown that $\{U_n\}_{n=1}^{\infty}$ is a martingale~\cite{WF:71}.
The process $\{Z_n\}_{n=1}^{\infty}$, whose $n$-fold joint distribution is denoted by $Q_{\rho_c,\delta_c}^{(n)}$, is also exchangeable (hence stationary) and non-ergodic with both $U_n$ and the process sample average $\frac{1}{n}\sum_{i=1}^n Z_i$ converging almost surely as $n \rightarrow \infty$ to a random variable governed by the Beta distribution with parameters $\frac{\rho_c}{\delta_c}$ and $\frac{1-\rho_c}{\delta_c}$; we denote this probability density function (pdf) by $\Betafun(\frac{\rho_c}{\delta_c},\frac{1-\rho_c}{\delta_c})$~\cite{WF:71,FA-TF:94}. Lastly, the $1$-dimensional distribution of the Polya process is
$Q_{\rho_c,\delta_c}^{(1)}(a)=P(Z_n = a) = (\rho_c)^a (1-\rho_c)^{1-a}$, for all $n\in \integers_{\geq 1}$ and $a \in \{0,1\}$. The above classical Polya process $\{Z_n\}_{n=1}^{\infty}$ is fully described by its parameters $\rho_c$ and $\delta_c$, and thus we denote it by $\Polya(\rho_c,\delta_c)$.

\section{Network Polya Contagion Process}\label{sec:model}

In this section, we introduce a generalization of the Polya contagion process to networks, where each individual node in the underlying graph that describes the network topology is still equipped with an urn; however, the node's neighbouring structure affects the evolution of its process. This model hence captures spatial contagion, since infected neighbours increase the chance of a node being infected in the future.

Consider an undirected graph $\G = (V,\E) $, where $V=\{1,\ldots, N\} $ is the set of $ N \in \integers_{\geq 1} $ nodes and $ \E \subset V\times V $ is the set of edges. We assume that $ \G $ is connected, i.e., there is a path between any two nodes in $ \G $. We use $\N_i$ to denote the set of nodes that are neighbours to node $i$, that is $\N_i = \{v \in V : (i,v) \in \E\}$, and $\N_i' = \{i\} \cup \N_i$. If $\N_i' = V $ for all $i \in V$, the network is called complete; if $ |\N_i| = |\N_j| $ for all $i,j\in V$, we call it regular. Each node $i\in V$ is equipped with an urn,  initially with $R_i \in \integers_{> 0}$ red balls and $B_i \in \integers_{> 0}$ black balls (we do not let $R_i = 0$ or $B_i=0$ to avoid any degenerate cases). We let $T_i = R_i + B_i$ be the total number of balls in the $ i $th urn, $i \in \{1,\cdots,N\}$. We use $Z_{i,n}$ as an indicator for the ball drawn for node $i$ at time $n$:
\begin{align*}
	Z_{i,n} = \begin{cases}
		1 &\text{if the $n$th draw for node $i$ is red}\\
		0 &\text{if the $n$th draw for node $i$ is black.}
	\end{cases}
\end{align*}
However, instead of drawing solely from its own urn, each node \bah{draws simultaneously from} a ``super urn'' created by combining all the balls in its own urn with the balls in its neighbours' urns; see Figure~\ref{fig:super_urn}. This allows the spatial relationships between nodes to influence their state. This means that $Z_{i,n}$ is the indicator for a ball drawn from node $i$'s super urn, and not its individual urn. Hence, the super urn of node $ i $ initially has $\bR_i = \sum_{j \in \N_i'}R_j$ red balls, $\bB_i = \sum_{j \in \N_i'}B_j$ black balls, and $\bT_i = \sum_{j \in \N_i'}T_j$ balls in total.
{
\psfrag{1}[rr][rr]{{\tiny $1$}}
\psfrag{2}[rr][rr]{{\tiny $2$}}
\psfrag{3}[rr][rr]{{\tiny $3$}}
\psfrag{4}[rr][rr]{{\tiny $4$}}
\psfrag{Node 1's super urn}[rr][rr]{{\tiny Node 1's super urn}}
\begin{figure}[!ht]
\centering 
\includegraphics[width=0.75\linewidth]{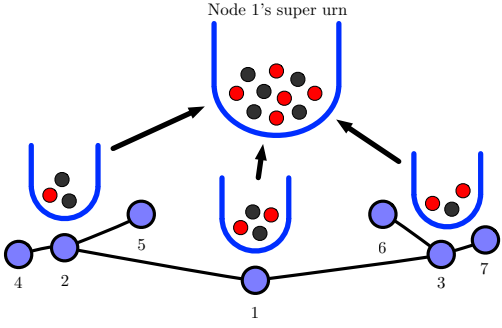}
\caption{\bah{Illustration of a super urn in a network.}}
\label{fig:super_urn}
\end{figure}
}

We further consider a time-varying version of the classical Polya contagion process, following~\cite{RP:90}, where at time $t$ for node $i\in V$, after a red ball is drawn it is returned along with $\Delta_{r,i}(t)$ red balls to node $i$'s urn, and $\Delta_{b,i}(t)$ black balls along with the drawn ball are added to node $i$'s urn when a black ball is drawn. When $\Delta_{r,i}(t) = \Delta_{b,i}(t)$ for all $t \in \integerspositive$, we write $\Delta_i(t)$ instead; if the $\Delta$'s are not node-dependent, we omit the node index. We assume throughout that $\Delta_{r,i}(t) \geq 0, \Delta_{b,i}(t) \geq 0$, for all $t \in \integerspositive $ and that there exists $i\in V$ and $ t $ such that $\Delta_{r,i}(t) + \Delta_{b,i}(t) \neq 0$; otherwise we are simply sampling with replacement.

\bah{In the context of epidemics, the red and black balls in an urn, respectively, represent units of ``infection'' and ``healthiness''; for example, bacteria and white blood cells. In a super urn, the bacteria can infect others in the area and the white blood cells contribute to the overall health in the neighbourhood of an individual. Drawing red at time $ t $ means the bacteria in the neighbourhood were successful in reproduction and so the individual was more infected, otherwise they were healthier since the white blood cells reproduced. Thus when $Z_{i,n}=1$, we declare that node $i$ is {\em infected} at time $n$, and if $Z_{i,n}=0$, then it is {\em healthy}. We add more units of bacteria once they reproduce, but commonly assume this number, $ \Delta_{r,i}(t) $, is the same across all individuals and time because the bacteria does not evolve or become altered. The amount of white blood cells created, $ \Delta_{b,i}(t) $, may change since we can give more medicine to certain people to increase their immune response, or vaccinate them so they are better able to fight the disease.}

To express the proportion of red balls in the individual urns of the nodes, we define the random vector $U_n=(U_{1,n},\ldots, U_{N,n})$, where $U_{i,n}$ is the proportion of red balls in node $i$'s urn after the $n$th draw, $ i \in V $. For node $i$, 
\begin{align*}
	U_{i,n} &:= \frac{R_i + \tsum Z_{i,t}\Delta_{r,i}(t)}{T_i + \tsum Z_{i,t}\Delta_{r,i}(t) + (1-Z_{i,t})\Delta_{b,i}(t)},
\end{align*}
where the numerator represents the total number of red balls in node $i$'s urn after the $n$th draw, while the denominator is the total number of balls in the same urn. Note that $ U_{i,0} = \frac{R_i}{T_i} $ is the initial proportion of balls in node $i$'s urn.
For ease of notation, let
\begin{align}\label{eq:X_n}
X_{j,n}=T_j + \sum_{t=1}^{n} Z_{j,t}\Delta_{r,j}(t) + (1-Z_{j,t})\Delta_{b,j}(t).
\end{align}
Furthermore, we define the random vector $ S_n = (S_{1,n},...,S_{N,n}) $ as the proportion of red balls in the super urns of the nodes after the $ n $th draw, so that $ S_{i,n} $ is the proportion of red balls in node $ i $'s super urn after $ n $ draws. Hence, for node $ i $,
\begin{align}\label{eq:S_n}
	S_{i,n} 
	&:= \frac{\bR_i + \sum_{j \in \N_i'}\sum_{t=1}^{n} Z_{j,t}\Delta_{r,j}(t)}{\sum_{j \in \N_i'}X_{j,n}}  \cr
	&= \frac{\sum_{j \in \N_i^{'}} U_{j,n}X_{j,n}}{\sum_{j \in \N_i^{'}} X_{j,n}}.
\end{align}
Note that $ S_{i,0} = \frac{\bR_i}{\bT_i} $. $ S_{i,n} $ is in fact a function of the random draw variables of the network, and in particular of $ \{Z_j^{n}\}_{j \in \N_i'} $, but for ease of notation, when the arguments are clear, we write $ S_{i,n}\bah{(Z_1^{n},\cdots,Z_N^{n})} = S_{i,n} $. Then the conditional probability of drawing a red ball from the super urn of node $i$ at time $n$
given the complete network history, i.e. given all the past $n-1$ draw variables for each node in the network $\{Z_j^{n-1}\}_{j=1}^N=\{(Z_{1,1},\cdots,Z_{1,n-1}),\cdots,(Z_{N,1},\cdots,Z_{N,n-1})\}$, satisfies  
\begin{align}\label{eq:cond_eq}
	P&\left(Z_{i,n} = 1 | \{Z_j^{n-1}\}_{j=1}^N\right) \cr
	&= \frac{\bR_i + \sum_{j \in \N_i'}\sum_{t=1}^{n-1} Z_{j,t}\Delta_{r,j}(t)}{\sum_{j \in \N_i'}X_{j,n-1}} \nonumber \\
	&= S_{i,n-1}.
\end{align}
That is, the conditional probability of drawing a red ball for node $ i $ at time $ n $ given the entire past $ \{Z_j^{n-1}\}_{j=1}^N $ is the proportion of red balls in its super urn, $ S_{i,n-1} $. This is \bah{however} analogous to the original Polya case, but instead of relying on the individual proportion of red balls $ U_n $ to describe the conditional probability of drawing red balls, we use the super urn proportion of red balls since we now draw from there.

\begin{remark}\longthmtitle{Non-Markovity}
\bah{While~\eqref{eq:cond_eq} may appear to suggest some sort of Markovity property, the process is non-Markovian in general. This can easily be seen due to the fact that a draw at time $ n $ requires knowledge of all previous draws for the entire neighbourhood.}
\end{remark}

A main objective throughout the rest of this paper is to study the evolution and stochastic properties of the process defined above. Using the above conditional probability, we can determine the $n$-fold joint probability of the entire network $\G$: for $ a_i^{n} \in \{0,1\}^n$, $i \in \{1,...,N\}$, we have that
\begin{align}
	&P_{\G}^{(n)}(a_1^n,\cdots,a_N^n) \cr
	&:= P\left(\{Z_i^n = a_i^n\}_{i=1}^N\right) \cr
	&= \prod_{t=1}^nP\left(\{Z_{i,t}=a_{i,t}\}_{i=1}^N\ | \ \{Z_i^{t-1} = a_i^{t-1}\}_{i=1}^N\right) \cr
	&= \prod_{t=1}^n\prod_{i=1}^N \Big(S_{i,t-1}\Big)^{a_{i,t}}\Big(1 - S_{i,t-1}\Big)^{1-a_{i,t}},
	\label{eq:joint_eq}
\end{align}
where $S_{i,t} = S_{i,t}(a_1^t,\cdots,a_N^t) $ is defined in \eqref{eq:cond_eq}. \bah{With the above explicit joint distribution, it is possible to determine the distributions of each node's process. More specifically, using~\eqref{eq:joint_eq}, the $n$-fold distribution of node $i$'s process at time $t \geq n$ is}
\begin{align*}
P_{i,t}^{(n)}&(a_{i,t-n+1},...,a_{i,t}) := \sum_{\substack{a_i^{t-n} \in \{0,1\}^{t-n} \\ a_j^t \in \{0,1\}^t : j\ne i}} P_{\G}^{(t)}(a_1^t,\cdots,a_N^t).
\end{align*} 
\bah{In order to measure the spread of contagion in the network at any given time, we wish to see how likely it is, on average, for a node to be infected at that instant.} We \bah{hence} define the {\em average infection \bah{rate}} in the network at time $n$ as \bah{the average marginal probability of drawing a red ball,}
\[
\tilde{I}_n := \frac{1}{N} \Nsum P(Z_{i,n}=1) = \frac{1}{N} \Nsum P_{i,n}^{(1)}(1).
\]
Note that $ \tilde{I}_n $ is a function of the network topology $ (V,\E) $, the initial placement of balls $ R_i $ and $ B_i $, the draw processes $ \{Z_{i,t}\}_{t=1}^n $, and number of balls added $ \{\Delta_{r,i}(t)\}_{t=1}^n $ and $ \{\Delta_{b,i}(t)\}_{t=1}^n $ for each node $ i \in V $. Unfortunately for an arbitrary network, the above quantity does not yield an exact analytical formula (except in the simple case of complete networks). As such, in general it is hard to mathematically analyze the asymptotic behaviour of $\tilde{I}_n$, which we wish to minimize when attempting to cure an epidemic. Instead we examine the asymptotic stochastic behavior of a closely related variable given by the average individual proportion of red balls at time $n$, namely
\[
	\tilde{U}_n  := \frac{1}{N} \Nsum U_{i,n}, 
\]
which we call the {\em network susceptibility}. \bah{This quantity is related to the conditional probability of drawing a red ball, as seen in~\eqref{eq:S_n}. Since the individual urn of node $ i $ is in every super urn in the neighbourhood, if $ U_{i,n} $ increases then $ S_{j,n} $ increases for every $ j \in \N_i' $, and hence given the past history those nodes are more likely to exhibit infected behaviour as seen from~\eqref{eq:cond_eq}.} Note that similarly to $ \tilde{I}_n $, $ \tilde{U}_n $ is a function of the network variables.

\begin{remark}\longthmtitle{Finite Memory}\label{remark:finite_memory}
It is worth pointing out that a practical adaptation to our model can be considered, where urns have ``finite memory'' in the sense that the balls added after each draw are only kept in each node's urn for a finite number of future draws. This model is developed in~\cite{FA-TF:94} for the classical Polya process in the context of modelling communication channels, where it is shown that the resulting finite memory contagion process is stationary, Markovian and ergodic. \bah{We present the following result that states that in this case the entire state is Markovian and hence it is a limited reinforcement model, but leave an in-depth investigation to a future work.} \oprocend
\end{remark}
\begin{proposition}\longthmtitle{Finite Memory Markovity}
\bah{The entire state of the network Polya contagion process $ \Zprocess{n} $ with finite memory $ M $ is Markovian with memory $ M $.}
\end{proposition}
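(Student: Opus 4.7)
The plan is to trace how the finite-memory restriction propagates through the conditional-distribution formula~\eqref{eq:cond_eq} and the joint factorization~\eqref{eq:joint_eq}, and to observe that each super urn's composition at time $n-1$ is a function of only the last $M$ draws in the relevant neighbourhood. Throughout, I will treat the full network state as the vector $Z_n = (Z_{1,n},\ldots,Z_{N,n})$ and write $Z_{n-M}^{n-1}$ for the block of $M$ consecutive past state vectors.

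First, I would make the finite-memory dynamics explicit: under memory $M$, the balls added to node $j$'s urn at time $t$ are discarded after $M$ subsequent draws. Consequently, the total red-ball content of node $j$'s urn after draw $n-1$ is $R_j + \sum_{t=(n-1-M)^+ +1}^{n-1} Z_{j,t}\Delta_{r,j}(t)$, and analogously for the black balls in $X_{j,n-1}$ from~\eqref{eq:X_n}. Substituting these truncated sums into~\eqref{eq:S_n}, one sees that $S_{i,n-1}$ is a deterministic function of only $\{Z_{j,n-M},\ldots,Z_{j,n-1}\}_{j\in\N_i'}$, i.e.\ a function of $Z_{n-M}^{n-1}$ (for $n > M$; for $n \le M$ the original initial conditions suffice and the Markov condition holds trivially).

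Next, I would appeal to the conditional-independence structure already encoded in~\eqref{eq:joint_eq}: given the entire past $\{Z_j^{n-1}\}_{j=1}^N$, the coordinates $Z_{1,n},\ldots,Z_{N,n}$ are independent Bernoulli's with parameters $S_{1,n-1},\ldots,S_{N,n-1}$. Combining this with the previous paragraph,
\begin{align*}
P\bigl(Z_n = a_n \,\big|\, Z^{n-1}\bigr)
&= \prod_{i=1}^N S_{i,n-1}^{a_{i,n}}(1-S_{i,n-1})^{1-a_{i,n}} \\
&= P\bigl(Z_n = a_n \,\big|\, Z_{n-M}^{n-1}\bigr),
\end{align*}
since the right-hand product is measurable with respect to $\sigma(Z_{n-M}^{n-1})$. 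This is exactly the statement that $\{Z_n\}_{n=1}^\infty$ is Markov of order $M$.

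The only genuine obstacle is the bookkeeping step of writing the truncated urn compositions correctly, together with making sure the ``super urn'' truncation is coherent across neighbours (each node $j\in\N_i'$ contributes its own last-$M$ additions to $i$'s super urn, not some longer history). Once this accounting is in place, the Markov conclusion follows immediately from the product form in~\eqref{eq:joint_eq}; no martingale or convergence arguments are needed. I would also briefly note that the result holds for the full state $Z_n$ but not for any single coordinate $Z_{i,n}$, since the neighbours' draws enter $S_{i,n-1}$ and are unobserved if we condition on node $i$'s history alone.
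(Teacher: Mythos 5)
Your proposal is correct and follows essentially the same route as the paper's own proof: truncate the urn (and hence super-urn) compositions to the last $M$ additions so that $S_{i,n-1}$ depends only on $\{Z_{j,n-M}^{n-1}\}_{j\in\N_i'}$, then use the conditional independence of the coordinates of $Z_n$ given the past to pass from the single-node statement to the full network state. Your explicit measurability/tower-property remark and the treatment of $n\le M$ are minor refinements of the same argument, not a different approach.
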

\begin{proof}
By~\eqref{eq:X_n} and~\eqref{eq:cond_eq} and the fact that added balls are removed after $ M $ steps, we have for $ n > M $ that
\begin{align*}
	&P\left(Z_{i,n} = 1 \ | \ \{Z_j^{n-1}\}_{j=1}^N \right) \\
	&= \frac{\bR_i + \sum_{j \in \N_i'}\sum_{t=n-M}^{n-1} Z_{j,t}\Delta_{r,j}(t) }{\bT_i + \sum_{j \in \N_i'}\sum_{t=n-M}^{n-1} Z_{j,t}\Delta_{r,j}(t) + (1-Z_{j,t})\Delta_{b,j}(t)} \\
	&= P\left(Z_{i,n} = 1 \ | \ \{Z_{j,n-M}^{n-1}\}_{j=1}^N \right).
\end{align*}

Using the above result along with conditional independence, for $ (a_1,\ldots,a_N) \in \{0,1\}^N $ we have for $ n > M $ that
\begin{align*}
	&P\left(Z_{1,n} = a_1,\ldots,Z_{N,n} = a_N \ | \ \{Z_j^{n-1}\}_{j=1}^N \right) \\
	&= \prod_{i=1}^N P\left(Z_{i,n} = 1 \ | \ \{Z_j^{n-1}\}_{j=1}^N \right) \\
	&= \prod_{i=1}^N P\left(Z_{i,n} = 1 \ | \ \{Z_{j,n-M}^{n-1}\}_{j=1}^N \right) \\
	&= P\left(Z_{1,n} = a_1,\ldots,Z_{N,n} = a_N \ | \ \{Z_{j,n-M}^{n-1}\}_{j=1}^N \right),
\end{align*}
and hence the entire network process $ \Zprocess{n} $ is Markovian with memory $ M $.
\end{proof}

\section{Stochastic Properties}\label{sec:stochastic_properties}

We next examine the stochastic properties of the network contagion process. We assume throughout the beginning of this section that $\Delta_{r,i}(t)=\Delta_{b,i}(t)=\Delta>0$, for all $ i\in V$ and times $t$; that is the net number of red and black balls added are equal and constant in time for all nodes. In the case of a complete network, the composition of every nodes' super urn is identical, since there is only one super urn that is being drawn from. Thus for a complete network the super urn model is analogous to one urn where multiple draws occur with replacement, which has been recently studied in detail~\cite{MC-MK:13}. However, the analysis in~\cite{MC-MK:13} is carried out in an aggregate sense, i.e., only for the entire urn and not individual processes. Unfortunately, this aggregate approach does not work in a network setting, whereas the super urn model proposed here is applicable.

\subsection{Complete Network Marginal Distributions}\label{subsec:comp_dist}

We first focus on the special case of complete networks to derive some useful probability distributions; later on, we will obtain other stochastic properties that apply to more general networks.

Given that the network is complete, we focus on one of the nodes, say $ i \in V$. For ease of notation, we define $ \bT_j = \sum_{k=1}^N T_k =: \bT $, and similarly, $\bR_j =: \bR, \bB_j =: \bB $, for all $ j \in V $. Defining the events $A_{n-1} = \{Z_{i,n-1}=a_{n-1},...,Z_{i,1} = a_1\}$ and $ W_{n-1} = \{A_{n-1}, \{Z_{j,1}^{n-1} = b_{j}^{n-1}\}_{j\neq i}\}$ with $b_j^{n-1} \in \{0,1\}^{n-1}$, and parameters $ \rho = \frac{\bR}{\bT} $ and $ \delta = \frac{N\Delta}{\bT} $, we can write using~\eqref{eq:cond_eq} under the above assumptions, that 
\begin{align}\label{eq:A_n}
	&P(Z_{i,n}=1,A_{n-1}) \cr
	=& \sum_{\mathclap{\substack{b_{j}^{n-1}\in\{0,1\}^{n-1} : j\neq i}}} P(Z_{i,n} = 1 | W_{n-1}) P(W_{n-1}) \cr
	=& \sum_{\mathclap{b_{j}^{n-1}:j\neq i}} \frac{\bR + \Delta\sum_{t=1}^{n-1}\left(a_t + \sum_{j\neq i} b_{j,t}\right)}{\bT + \sum_{t=1}^{n-1} \Delta +\sum_{j\neq i}\Delta} P(W_{n-1}) \cr
	=& \sum_{\mathclap{b_{j}^{n-1}:j\neq i}} \frac{\rho + \frac{\delta}{N}\sum_{t=1}^{n-1}\left(a_t + \sum_{j\neq i} b_{j,t}\right)}{1 + (n-1)\delta}P(W_{n-1}) \cr
	=& \sum_{\mathclap{b_{j}^{n-1}:j\neq i}}\enskip\enskip\Bigg[\rho \frac{P(A_{n-1},\{Z_{j}^{n-1} = b_{j}^{n-1}\}_{j\neq i})}{1+(n-1)\delta} \cr
	&+ \frac{\delta}{N}\sum_{t=1}^{n-1}\Bigg(a_t \frac{P(A_{n-1},\{Z_{j}^{n-1} = b_{j}^{n-1}\}_{j\neq i})}{1+(n-1)\delta}\cr
	&+ \sum_{j\neq i} \frac{b_{j,t}P(A_{n-1},\{Z_{j}^{n-1} = b_{j}^{n-1}\}_{j\neq i})}{1+(n-1)\delta} \Bigg)\Bigg].
\end{align}
By examining an arbitrary term $k\neq i$ in the final sum above, for fixed $t\in\{1,...,n-1\}$, we can sum out all the other draw variables:
\begin{align}\label{eq:A_n_marg_1}
	&\sum_{\substack{b_{j}^{n-1}\in\{0,1\}^{n-1} : j\neq i}} b_{k,t}P(A_{n-1},\{Z_{j}^{n-1} = b_{j}^{n-1}\}_{j\neq i}) \cr
	&= \sum_{b_k^{n-1}\in\{0,1\}^{n-1}} b_{k,t}P(A_{n-1},Z_{k}^{n-1} = b_{k}^{n-1}) \cr
	&= \sum_{b_{k,t}\in\{0,1\}} b_{k,t}P(A_{n-1},Z_{k,t} = b_{k,t}) \nonumber \\
	&= P(A_{n-1},Z_{k,t}=1).
\end{align}
Further, by the law of total probability,
\begin{align}\label{eq:A_n_marg_2}
 &\sum_{\mathclap{\substack{b_{j}^{n-1}\in\{0,1\}^{n-1} : j\neq i}}}P(A_{n-1},\{Z_{j}^{n-1} = b_{j}^{n-1}\}_{j\neq i}) = P(A_{n-1}).
\end{align}
So using~\eqref{eq:A_n_marg_1} and~\eqref{eq:A_n_marg_2},~\eqref{eq:A_n} becomes
\begin{align*}
	& \frac{ \rho P(A_{n-1}) + \frac{\delta}{N}\quad\,\mathclap{\sum_{t=1}^{n-1}}\quad\left[a_tP(A_{n-1}) + \enskip\enskip\mathclap{\sum_{j\neq i}}\quad\, P(A_{n-1},Z_{j,t}=1)\right]}{1+(n-1)\delta}
\end{align*}
Thus, using the law of total probability, we have
\begin{align}\label{eq:marg}
	P(Z_{i,n} = 1) &= \quad\sum_{\mathclap{a^{n-1}\in\{0,1\}^{n-1}}} P(Z_{i,n}=1,A_{n-1}) \cr
	&= \sum_{a^{n-1}} \frac{ \rho P(A_{n-1}) + \frac{\delta}{N}\sum_{t=1}^{n-1} a_tP(A_{n-1})}{1+(n-1)\delta} \cr
	&\qquad\qquad+ \frac{\frac{\delta}{N}\sum_{t=1}^{n-1}\sum_{j\neq i}P(A_{n-1},Z_{j,t}=1)}{1 + (n-1)\delta} \cr
		&= \frac{\rho + \frac{\delta}{N} \sum_{t=1}^{n-1}\sum_{j=1}^N P(Z_{j,t}=1)}{1 + (n-1)\delta}.
\end{align}

An interesting corollary of this derivation is as follows.

\begin{lemma}\longthmtitle{Complete Network Marginal Distribution}\label{lem:approx_1_diml}
The $1$-dimensional marginal distribution of node $i$'s contagion draw process $\{Z_{i,n}\}_{n=1}^{\infty}$ for the $N$-node complete network is given by
\[
P_{i,n}^{(1)} = P(Z_{i,n}  = a) = \rho^a(1-\rho)^{1-a},
\]
where $i\in V$, $n\geq 1$, and $a \in\{0,1\}$. 
\end{lemma}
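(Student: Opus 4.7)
The plan is to establish the claim by induction on $n$, leveraging the marginal formula~\eqref{eq:marg} that was just derived immediately before the lemma statement.

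For the base case $n=1$, I would argue directly from the super urn mechanism: at time $1$, the super urn of node $i$ contains $\bR$ red balls out of $\bT$ total balls (same for every node, by completeness), so $P(Z_{i,1}=1) = \bR/\bT = \rho$, and $P(Z_{i,1}=0) = 1-\rho$. This matches $\rho^a(1-\rho)^{1-a}$ for $a\in\{0,1\}$.

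For the inductive step, suppose that $P(Z_{j,t}=1)=\rho$ for every $j\in V$ and every $t\in\{1,\dots,n-1\}$. Substituting this into~\eqref{eq:marg} gives
\begin{align*}
P(Z_{i,n}=1) &= \frac{\rho + \tfrac{\delta}{N}\sum_{t=1}^{n-1}\sum_{j=1}^N \rho}{1+(n-1)\delta} \\
&= \frac{\rho + \tfrac{\delta}{N}\cdot (n-1)\cdot N\cdot \rho}{1+(n-1)\delta} \\
&= \frac{\rho\bigl(1+(n-1)\delta\bigr)}{1+(n-1)\delta} = \rho,
\end{align*}
which closes the induction. The formula $P(Z_{i,n}=a)=\rho^a(1-\rho)^{1-a}$ then follows for both values of $a$.

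There is no real obstacle here: the heavy lifting was already done in deriving~\eqref{eq:marg}, and the key structural point is that completeness forces the network-wide symmetry that makes the telescoping collapse work. The only minor care is to note that the argument is identical for every $i\in V$ (by the completeness/symmetry of the network), so the inductive hypothesis at step $n$ can indeed be applied uniformly across all nodes $j$. I would state this symmetry explicitly before invoking~\eqref{eq:marg} to justify replacing each $P(Z_{j,t}=1)$ with the common value $\rho$.
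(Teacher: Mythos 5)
Your proposal is correct and matches the paper's own argument: a (strong) induction on $n$ with the base case $P(Z_{i,1}=1)=\bar{R}/\bar{T}=\rho$ and the inductive step obtained by substituting the hypothesis into~\eqref{eq:marg}, where the factor $N(n-1)\rho\cdot\frac{\delta}{N}$ collapses to $(n-1)\delta\rho$ and the quotient equals $\rho$. Your added remark about invoking the network symmetry uniformly across nodes is exactly the point the paper handles implicitly, so nothing further is needed.
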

\begin{proof}
We proceed using strong induction on $ n \geq 1 $, showing that $P(Z_{i,n}=1)=\rho$, for all nodes $ i \in V $ and all $n$. The base case readily holds, since at time $ n = 1 $,
\begin{align*}
	P(Z_{1,1} = 1) &= \cdots = P(Z_{N,1} = 1) = \frac{\sum_{i=1}^N R_i}{\sum_{i=1}^N T_i} = \rho.
\end{align*}
Now, assuming that $P(Z_{j,t} = 1) = \rho $ for all $ j \in V $ and $ t\leq n$ and using~\eqref{eq:marg}, we have
\begin{align*}
	P(Z_{i,n+1} = 1) 
		&= \frac{\rho + \frac{\delta}{N} \sum_{t=1}^{n}\sum_{j=1}^N P(Z_{j,t}=1)}{1 + n\delta} \\
		&= \frac{\rho + \sum_{t=1}^n\frac{\delta}{N}N\rho}{1 + n\delta} \\
		&= \frac{\rho + \delta\tsum\rho}{1+n\delta}	=\rho,
\end{align*}
which completes the induction argument. The result now follows using the fact that
\begin{align*}
	&P(Z_{j,n}=1) + P(Z_{j,n}=0) = 1 \Rightarrow P(Z_{j,n}=0) = 1-\rho,
\end{align*}
for all $ j \in V $ and all $n$.
\end{proof}

We next show that each node's draw process is not stationary in general, and hence is different from the classical $\Polya(\rho_c,\delta_c)$ process.
\begin{remark}\longthmtitle{Non-Stationarity of the Network Contagion Process}\label{rem:not_stat}
Consider a $2$-node complete network. Then, using~\eqref{eq:joint_eq}, one can obtain (after some simplifications) that 
\begin{align*}
	&P(Z_{1,2} = 1, Z_{1,1} = 1) = \rho\frac{\rho+(1+\rho)\frac{\delta}{2}}{1+\delta}, \\
	&P(Z_{1,3} = 1, Z_{1,2} = 1) \\
	&= \sum_{\mathclap{\substack{a_1\in\{0,1\} \\ b^3\in\{0,1\}^3}}}P(Z_{1,1} = a_1,\{Z_{1,t} = 1\}_{t=2}^3, \{Z_{2,t} = b_t\}_{t=1}^3) \\
	&= \rho\frac{4\rho + \delta(2+ 14\rho) + \delta^2(6+14\rho) + \delta^3(5+3\rho)}{4(1 + \delta)^2(1+2\delta)},
\end{align*}
and hence the network process is not stationary.\oprocend
\end{remark}

Since every exchangeable process is necessarily stationary, Remark~\ref{rem:not_stat} implies that the network Polya process is not exchangeable in general. However, some notions of stationarity remain; in our next result, we will see that there is a consistent relationship between the draws at the $1$st and $n$th time steps.

\begin{lemma}\longthmtitle{Complete Network \boldmath{$(n,1)$}-step Marginal Probability}\label{lem:approx_n1_diml}
For the complete network, the $2$-dimensional marginal probability that node $i$'s draw variables at times $n$ and $1$ are both one is given by
\[
P(Z_{i,n} = 1, Z_{i,1} = 1) = \rho\frac{\rho + (1 + (N-1)\rho)\frac{\delta}{N}}{1 + \delta},
\]
for $i\in V$, $n\geq 2$. Furthermore, for any other node $k$, 
\[
	P(Z_{k,n} = 1, Z_{i,1} = 1) = \rho\frac{\rho + (1 + (N-1)\rho)\frac{\delta}{N}}{1 + \delta}.
\]
\end{lemma}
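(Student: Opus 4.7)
I would prove both identities simultaneously by strong induction on $n \geq 2$, exploiting the key structural observation that in a complete network every super urn is identical, so $S_{i,n} = S_{k,n}$ for all $i,k \in V$. Consequently the identity for $P(Z_{k,n}=1, Z_{i,1}=1)$ requires no extra work beyond the identity for $P(Z_{i,n}=1, Z_{i,1}=1)$: the computation below never distinguishes $i$ from $k$ once the outer factor $\mathbf{1}_{\{Z_{i,1}=1\}}$ is separated from $S_{\cdot,n-1}$. Throughout I would use Lemma~\ref{lem:approx_1_diml} to substitute $P(Z_{j,t}=1) = \rho$ whenever needed, and the fact that at time $t=1$ the draws across the nodes are conditionally independent given the (deterministic) initial super urn composition, so by~\eqref{eq:joint_eq} one has $P(Z_{j,1}=1, Z_{i,1}=1) = \rho^{2}$ for $j\neq i$. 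Abbreviate the target by $Q := \rho\bigl(\rho + (1+(N-1)\rho)\tfrac{\delta}{N}\bigr)/(1+\delta)$.

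\textbf{Base case $n=2$.} Use the tower property together with~\eqref{eq:cond_eq} to write
\begin{equation*}
P(Z_{i,2}=1, Z_{i,1}=1) = E\bigl[\mathbf{1}_{\{Z_{i,1}=1\}}\, S_{i,1}\bigr].
\end{equation*}
In a complete network $S_{i,1} = \bigl(\rho + \tfrac{\delta}{N}\sum_{j=1}^{N} Z_{j,1}\bigr)/(1+\delta)$. Split the inner sum into $j=i$ (contributing $P(Z_{i,1}=1)=\rho$) and $j\neq i$ (each contributing $P(Z_{j,1}=1,Z_{i,1}=1)=\rho^{2}$); collecting terms recovers exactly $Q$.

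\textbf{Inductive step.} Suppose $P(Z_{j,t}=1, Z_{i,1}=1) = Q$ for every $j\in V$ and every $t \in \{2,\ldots,n\}$. Using~\eqref{eq:cond_eq} again,
\begin{equation*}
P(Z_{i,n+1}=1, Z_{i,1}=1) = E\bigl[\mathbf{1}_{\{Z_{i,1}=1\}}\, S_{i,n}\bigr],
\end{equation*}
and I would expand $S_{i,n}$ as a double sum over $j \in V$ and $t=1,\ldots,n$, then split that sum into its time-$1$ contributions (handled as in the base case, giving $\rho$ when $j=i$ and $\rho^{2}$ otherwise) and its time-$\geq 2$ contributions (each equal to $Q$ by the hypothesis, with $N(n-1)$ such terms). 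The algebra collapses via the identity
\begin{equation*}
\rho^{2} + \tfrac{\delta}{N}\bigl(\rho + (N-1)\rho^{2}\bigr) = (1+\delta)\, Q,
\end{equation*}
which is just a rearrangement of the definition of $Q$: the denominator $1+n\delta$ of $S_{i,n}$ then cancels against the aggregated numerator $(1+\delta)Q + (n-1)\delta\, Q = (1+n\delta)\, Q$, reproducing $Q$ at level $n+1$.

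\textbf{Main obstacle.} None of the steps is analytically hard; the subtle part is bookkeeping, namely cleanly isolating the $t=1$ column of the double sum (where the joint probability is not yet $Q$ but $\rho$ or $\rho^{2}$) from the $t\geq 2$ columns (where the inductive hypothesis applies). The pleasant surprise is that the algebraic identity above is exactly what is needed to make the $t=1$ contributions fit seamlessly into the shifted pattern, which is precisely the time-shift invariance the lemma claims.
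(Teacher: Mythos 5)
Your argument is correct and is essentially the paper's own proof in lightly different notation: the paper also expands the super-urn proportion $S_{i,n-1}$ linearly in the past draws, isolates the time-$1$ terms (using $P(Z_{j,1}=1\,|\,Z_{i,1}=1)=\rho$, i.e.\ your $\rho^2$ joint probabilities), and closes the recursion by strong induction with the same algebraic identity, working with conditional probabilities $P(Z_{k,n}=1\,|\,Z_{i,1}=1)$ where you keep joint expectations $E[\mathbf{1}_{\{Z_{i,1}=1\}}S_{i,n}]$. Your observation that $S_{i,n}=S_{k,n}$ in a complete network handles the second identity exactly as the paper's appeal to symmetry does, so no gap remains.
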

\begin{proof}
By Lemma~\ref{lem:approx_1_diml} we have that $P(Z_{k,1}=1)=\rho$ for all $ k \in V$, so it is enough to show that
\begin{equation}\label{eq:aux4.3}
P(Z_{k,n}=1\ | \ Z_{i,1}=1)=\frac{\rho+(1+(N-1)\rho)\frac{\delta}{N}}{1+\delta}
\end{equation}
for all $ n $ and nodes $i$ and $k$. Using the law of total probability,~\eqref{eq:cond_eq}, and after some simplifications, with defining $W_{n-1}=\{Z_{i,2}^{n-1}=a_2^{n-1},\{Z_{j,1}^{n-1}=b_{j,1}^{n-1}\}_{j\neq i} \}$ \bah{(note that $ W_{n-1} $ is a function of $ \{b_{j,1}^{n-1}\}_{j\neq i} $, but for simplicity we omit this)}, we have that
\begin{align*}
	&P(Z_{i,n}=1\ | \ Z_{i,1}=1) \cr
	=&\quad\sum_{\mathclap{\substack{a_2^{n-1} \in \{0,1\}^{n-2} \\ b_{j,1}^{n-1}\in\{0,1\}^{n-1} : j\neq i}}} P(Z_{i,n}=1\ | \ Z_{i,1}=1,W_{n-1}) P(W_{n-1}\  | \ Z_{i,1}=1) \cr
	=&  \enskip\sum_{\mathclap{\substack{a_2^{n-1}, b_{j,1}^{n-1}:j\neq i}}} \frac{\rho+\frac{\delta}{N}(1+\sum_{t=2}^{n-1}a_t + \sum_{t=1}^{n-1}\sum_{j\neq i}b_{j,t})}{1+(n-1)\delta} \cr
	&\qquad\times P(W_{n-1}\  | \ Z_{i,1}=1) \cr
	=& \enskip\sum_{\mathclap{\substack{a_2^{n-1}, b_{j,1}^{n-1}:j\neq i}}}\qquad \Bigg[\left(\rho + \frac{\delta}{N}\right)\frac{P(W_{n-1}\  | \ Z_{i,1}=1)}{1+(n-1)\delta} \cr
	&\quad\qquad\qquad+ \frac{\delta}{N}\sum_{t=2}^{n-1} \frac{a_tP(W_{n-1}\  | \ Z_{i,1}=1)}{1+(n-1)\delta} \cr
	&\quad\qquad\qquad+ \frac{\delta}{N}\sum_{t=1}^{n-1} \frac{b_{j,t}P(W_{n-1}\  | \ Z_{i,1}=1)}{1+(n-1)\delta}\Bigg].
\end{align*}
Then, after arranging terms and using the law of total probability for
\begin{align*}
\qquad\sum_{\mathclap{\substack{a_2^{n-1}, b_{j,1}^{n-1}: j\neq i}}} b_{j,t}P(W_{n-1} \ | \ Z_{i,1}) = P(Z_{j,t}=1 \ | \ Z_{i,1} = 1),
\end{align*}
we have
\begin{align}\label{aux:4.3-2}
	&P(Z_{i,n}=1\ | \ Z_{i,1}=1) \cr
	&= \frac{(\rho + \frac{\delta}{N})(1)}{1 + (n-1)\delta} + \frac{\frac{\delta}{N}\sum_{t=2}^{n-1}P(Z_{i,t}=1|Z_{i,1}=1)}{1+(n-1)\delta} \cr
	&\quad+ \frac{\frac{\delta}{N}\sum_{t=1}^{n-1}\sum_{j\neq i}P(Z_{j,t}=1|Z_{i,1}=1)}{1+(n-1)\delta} \cr
	&= \frac{\rho + \frac{\delta}{N}\sum_{j\neq i}P(Z_{j,1}=1)}{1+(n-1)\delta} \cr
	&\quad+ \frac{\frac{\delta}{N}\left[1 + \sum_{j=1}^N\sum_{t=2}^{n-1} P(Z_{j,t}=1|Z_{i,1}=1)\right]}{1+(n-1)\delta} \cr
	&= \frac{\rho(1 + (N-1)\frac{\delta}{N})}{1+(n-1)\delta} \cr
	&\quad+\frac{\frac{\delta}{N}\left[1 + \sum_{j=1}^N\sum_{t=2}^{n-1}P(Z_{j,t}=1\ | \ Z_{i,1}=1)\right]}{1+(n-1)\delta}.
\end{align}
It can be similarly shown by symmetry of the complete network that~\eqref{aux:4.3-2} holds for $P(Z_{k,n}=1\ | \ Z_{i,1}=1)$ if $k \neq i$.

In order to show~\eqref{eq:aux4.3}, we proceed using strong induction on $n \geq 2$. For the base case, setting $n=2$ in~\eqref{aux:4.3-2}, we have for any $i,k \in V$,
\begin{align*}
	P(Z_{k,2}=1 |  Z_{i,1}=1) &= \frac{\rho(1 + (N-1)\frac{\delta}{N})+\frac{\delta}{N}}{1+ \delta} \cr
	&= \frac{\rho + (1+(N-1)\rho)\frac{\delta}{N}}{1+\delta},
\end{align*}
as desired. Assume now that $P(Z_{k,t}=1\ | \ Z_{i,1}=1)$ is given by~\eqref{eq:aux4.3}, for $2\leq t\leq n-1$ and any $i,k \in V$. Then by~\eqref{aux:4.3-2},
\begin{align*}
	&P(Z_{k,n}=1\ | \ Z_{i,1}=1) \\
	&= \frac{\rho(1 + (N-1)\frac{\delta}{N})}{1+(n-1)\delta} \cr
	&\quad+\frac{\frac{\delta}{N}\left[1 + \sum_{j=1}^N\sum_{t=2}^{n-1}P(Z_{j,t}=1\ | \ Z_{i,1}=1)\right]}{1+(n-1)\delta} \\
	&= \frac{\rho(1 + (N-1)\frac{\delta}{N}) + \frac{\delta}{N}\left[N(n-2)\frac{\rho + (1+(N-1)\rho)\frac{\delta}{N}}{1+\delta}\right]}{1+(n-1)\delta} \\
	&= \bah{\frac{1}{1+(n-1)\delta}\Bigg[(1+\delta)\frac{\rho + (1+(N-1)\rho)\frac{\delta}{N}}{1+\delta} }\\
	&\bah{\qquad+\delta(n-2)\frac{\rho + (1+(N-1)\rho)\frac{\delta}{N}}{1+\delta}\Bigg]} \\
	&=\bah{\frac{\rho + (1+(N-1)\rho)\frac{\delta}{N}}{1+\delta},}
\end{align*}
which completes the induction argument.
\end{proof}

Although the draw process is not stationary in general, simulated results suggest that it satisfies some asymptotic stationarity properties, in the sense that given sufficient time the process settles \bah{and deviations become very small in magnitude.
 A representative example is shown in Figure~\ref{fig:asymp_stat} for the $ 2 $-dimensional distribution at times $ n $ and $ n-1 $ in the 5-node network shown in Figure~\ref{subfig:5_ba_network}. }

{ 
	\psfrag{0.195}[rr][rr]{\scriptsize{$ 0.195 $}}
	\psfrag{0.2025}[rr][rr]{\scriptsize{$ 0.2025 $}}
	\psfrag{0.215}[rr][rr]{\scriptsize{$ 0.215 $}}
	\psfrag{500}[tc][bc]{\scriptsize{$ 500 $}}
	\psfrag{1000}[tc][bc]{\scriptsize{$ 1,000 $}}
	\psfrag{P(Z1n=1, Z1n-1=1)}[lc][lc]{\scriptsize{Empirical value}}
	\psfrag{Settled value}[lc][lc]{\scriptsize{Settled value}}
\begin{figure}[!ht]
\centering
\includegraphics[width=0.85\linewidth, height=0.15\textheight]{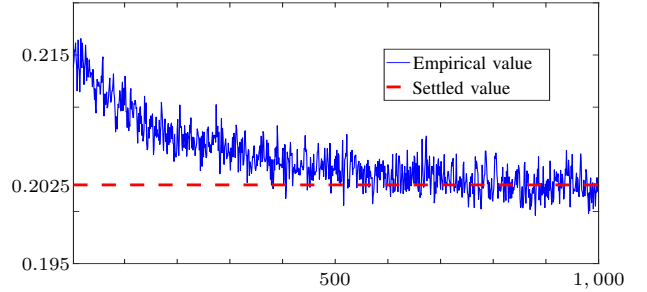}
\caption{Simulated values for $P(Z_{i,n} = 1,Z_{i,n-1}=1)$ for an arbitrary node $ i $ averaged over 50,000 simulated trials on the network shown in Fig.~\ref{subfig:5_ba_network}. All parameters remained constant throughout all trials; see \href{http://bit.ly/2tnBix5}{\color{blue}http://bit.ly/2tnBix5} for a complete list of them.}
\label{fig:asymp_stat}
\end{figure}
}
\vspace{-2.5mm}
\subsection{Martingale Theorems}\label{subsec:mart_thms}
We now turn our attention to the martingale properties of the network contagion process, where we do not assume that the network is necessarily complete. Recall that by \bah{the martingale convergence theorem~\cite{RA-CD:00,GG-DS:01}}, if a process $ \Zprocess{n} $ is a martingale (or supermartingale, or submartingale), there exists a random variable $Z$ such that $ \Zprocess{n} $ converges almost surely to $Z$ as $n \rightarrow \infty$.

\begin{theorem}\longthmtitle{Individual Urn Proportion Martingale}\label{thm:martingale}
For a network $ \G=(V,\E) $, $\Delta_{r,i}(n) = \Delta_{b,i}(n) = \Delta$ and $T_i = T$, for all $i\in V$ and all $n$, the individual proportion of red balls $\{U_{i,n}\}_{n=1}^{\infty}$ is a martingale with respect to the draws for the whole network $ \{Z_n\}_{n=1}^{\infty} = \{(Z_{1,n},...,Z_{N,n})\}_{n=1}^{\infty} $ if and only if, almost surely,
\[
\frac{1}{|\N_i|}\sum_{j \in \N_i}U_{j,n-1} = U_{i,n-1}.
\]
\end{theorem}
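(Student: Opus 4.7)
The plan is to exploit the symmetry assumptions ($T_i = T$ and $\Delta_{r,i}(n) = \Delta_{b,i}(n) = \Delta$ for all $i,n$) to reduce the martingale condition to a simple algebraic identity. Under these assumptions, exactly $\Delta$ balls are added to urn $i$ after each draw, regardless of outcome, so the denominator of $U_{i,n}$ is deterministic; explicitly, $X_{i,n} = T + n\Delta$, independent of $i$ and of the draws. This is the key structural observation that makes everything clean.

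First I would compute $E[U_{i,n}\mid Z_1,\ldots,Z_{n-1}]$ directly from the update formula. Writing $A_{n-1} := R_i + \Delta\sum_{t=1}^{n-1} Z_{i,t}$ so that $U_{i,n-1} = A_{n-1}/(T+(n-1)\Delta)$, and using the fact that $Z_{i,n}$ is the only random quantity entering $U_{i,n}$ beyond $A_{n-1}$, conditional expectation gives
\[
E[U_{i,n}\mid Z^{n-1}] \;=\; \frac{A_{n-1} + \Delta\,E[Z_{i,n}\mid Z^{n-1}]}{T+n\Delta} \;=\; \frac{A_{n-1} + \Delta\,S_{i,n-1}}{T+n\Delta},
\]
where the last equality uses the conditional probability identity~\eqref{eq:cond_eq}. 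Also, $|U_{i,n}|\le 1$ so integrability is automatic.

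Next I would show that the right-hand side equals $U_{i,n-1}$ almost surely if and only if $S_{i,n-1} = U_{i,n-1}$ a.s. This is a direct cross-multiplication: setting the expression equal to $A_{n-1}/(T+(n-1)\Delta)$ and simplifying yields $\Delta S_{i,n-1}(T+(n-1)\Delta) = \Delta A_{n-1}$, i.e.\ $S_{i,n-1} = U_{i,n-1}$.

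Finally I would translate $S_{i,n-1} = U_{i,n-1}$ into the stated neighbour-average condition. Because all $X_{j,n-1}$ are equal to $T+(n-1)\Delta$ under our assumptions, the formula for $S_{i,n-1}$ in~\eqref{eq:S_n} collapses to the unweighted average
\[
S_{i,n-1} \;=\; \frac{1}{|\N_i'|}\sum_{j\in\N_i'} U_{j,n-1} \;=\; \frac{1}{|\N_i|+1}\Bigl(U_{i,n-1} + \sum_{j\in\N_i} U_{j,n-1}\Bigr).
\]
Setting this equal to $U_{i,n-1}$ and solving yields $\frac{1}{|\N_i|}\sum_{j\in\N_i} U_{j,n-1} = U_{i,n-1}$, which is the equivalence claimed. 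The reverse direction follows by running the same chain of equalities backwards. I do not anticipate a real obstacle here; the only step requiring a little care is making sure the conditional expectation is taken with respect to the full network filtration $\{Z_n\}$ rather than just $\{Z_{i,n}\}$, since $S_{i,n-1}$ depends on neighbours' past draws, but this is built into the statement of the theorem.
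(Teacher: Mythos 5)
Your proposal is correct and follows essentially the same route as the paper's proof: both compute $E[U_{i,n}\mid Z^{n-1}]$ using $E[Z_{i,n}\mid Z^{n-1}]=S_{i,n-1}$ together with the fact that the deterministic ball count makes $S_{i,n-1}$ collapse to the unweighted neighbourhood average $\frac{1}{|\N_i|+1}\bigl(U_{i,n-1}+\sum_{j\in\N_i}U_{j,n-1}\bigr)$, and then reduce the martingale identity to $\frac{1}{|\N_i|}\sum_{j\in\N_i}U_{j,n-1}=U_{i,n-1}$. The only cosmetic difference is that you isolate the intermediate equivalence $S_{i,n-1}=U_{i,n-1}$ by cross-multiplication, whereas the paper expands everything into the form $U_{i,n-1}+\frac{\Delta\sum_{j\in\N_i}(U_{j,n-1}-U_{i,n-1})}{(T+n\Delta)(|\N_i|+1)}$; the content is the same.
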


\begin{proof}
Using the expression for $ U_{i,n} $,~\eqref{eq:S_n}, and~\eqref{eq:cond_eq}, we have \bah{almost surely}
\begin{align}
  	&E[U_{i,n}\ | \ Z_{n-1} ] \cr
  	&= E\left[ \frac{\Delta Z_{i,n} + U_{i,n-1}(T + (n-1)\Delta)}{T + n\Delta} \ \Big| \ Z_{n-1} \right] \cr
  	&= \frac{U_{i,n-1}(T + (n-1)\Delta)}{T + n\Delta} + \frac{\Delta E[Z_{i,n}\ | \ Z_{n-1}]}{T + n\Delta} \cr
	&= U_{i,n-1}\frac{T+(n-1)\Delta}{T + n\Delta} + \frac{\Delta P(Z_{i,n}=1|Z^{n-1})}{T+n\Delta} \cr
	&= U_{i,n-1}\left(1 - \frac{\Delta}{T + n\Delta}\right) \,\mathclap{+}\enskip \frac{\Delta\sum_{j \in \N_i'} U_{j,n-1}(T+(n-1)\Delta)}{(T+n\Delta)|\N_i'|(T+(n-1)\Delta)} \cr
	&= \bah{U_{i,n-1} - \frac{\Delta U_{i,n-1}}{T+n\Delta} + \Delta\frac{U_{i,n-1} + \sum_{j \in \N_i}U_{j,n-1}}{|\N_i'|(T+n\Delta)} }\cr
	&= \bah{U_{i,n-1} +\Delta\frac{\left[\sum_{j \in \N_i}U_{j,n-1}\right] - |\N_i|U_{i,n-1}}{(T + n\Delta)(|\N_i| + 1)}} \cr
  	&= U_{i,n-1} + \frac{\Delta\sum_{j \in \N_i}(U_{j,n-1} - U_{i,n-1})}{(T + n\Delta)(|\N_i| + 1)}.
	\label{eq:individual_mart}
\end{align}
This implies that $\{U_{i,n}\}_{n=1}^{\infty}$ is a martingale with respect to $ \{Z_n\}_{n=1}^{\infty} $ if and only if
\begin{align*}
\sum_{j \in \N_i}U_{j,n-1} - U_{i,n-1} = 0 
\Leftrightarrow \frac{1}{|\N_i|}\sum_{j \in \N_i}U_{j,n-1} = U_{i,n-1}.
\end{align*}
almost surely. 
\end{proof}

If the condition in Theorem~\ref{thm:martingale} holds, \bah{then for any $i$} both $U_{i,n}$ and $\frac{1}{n}\sum_{t=1}^n Z_{i,t}$ converge almost surely to a limit as $n \rightarrow \infty$. However, the condition of Theorem~\ref{thm:martingale}, \bah{barring the trivial single node scenario}, is not verifiable. To resolve this issue, we instead examine the evolution of the average proportion of red balls (i.e., the susceptibility) in a regular network.

\begin{theorem}\longthmtitle{Regular Network Susceptibility Martingale}\label{thm:avg_prop}
For a regular network $ \G=(V,\E) $ with $\Delta_{r,i}(n) = \Delta_{b,i}(n) = \Delta$ and $T_i = T$ for all nodes $i\in V$ and times $n$, the network susceptibility $\{\tilde{U}_n\}_{n=1}^{\infty}$, 
where $\tilde{U}_n = \frac{1}{N}\sum_{i=1}^N U_{i,n}$, is a martingale with respect to $\{Z_n\}_{n=1}^{\infty}$.
\end{theorem}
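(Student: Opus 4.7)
The plan is to leverage the per-node computation already done in the proof of Theorem~\ref{thm:martingale} and then exploit the symmetry afforded by regularity to make the cross terms cancel when we average over nodes. Specifically, equation~\eqref{eq:individual_mart} gives the one-step conditional expectation of $U_{i,n}$ without requiring the (unverifiable) averaging hypothesis of Theorem~\ref{thm:martingale}; this identity holds for any network with uniform $T$ and $\Delta$. So the first step is simply to invoke \eqref{eq:individual_mart} and note that $|\N_i'| = |\N_i| + 1$ is a constant, say $d+1$, by regularity.

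Next I would average \eqref{eq:individual_mart} over $i \in V$ and observe
\[
E[\tilde{U}_n \mid Z_{n-1}] \;=\; \tilde{U}_{n-1} \;+\; \frac{\Delta}{N(T+n\Delta)(d+1)} \sum_{i=1}^N \sum_{j \in \N_i}\bigl(U_{j,n-1} - U_{i,n-1}\bigr).
\]
The key observation is that the double sum vanishes. This follows from a standard double counting argument: $\sum_{i} \sum_{j \in \N_i} U_{j,n-1} = \sum_j |\N_j|\, U_{j,n-1} = d \sum_j U_{j,n-1}$ because each $j$ appears as a neighbour exactly $|\N_j| = d$ times, while $\sum_{i} \sum_{j \in \N_i} U_{i,n-1} = \sum_i |\N_i|\, U_{i,n-1} = d \sum_i U_{i,n-1}$. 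Equivalently, viewing the double sum edge-wise, the contributions of the two orientations of any edge $\{i,j\}$ cancel. This yields $E[\tilde{U}_n \mid Z_{n-1}] = \tilde{U}_{n-1}$ almost surely.

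Finally, the integrability condition $E[|\tilde{U}_n|] < \infty$ is immediate since $\tilde{U}_n \in [0,1]$ by construction (it is an average of proportions). Together with the one-step identity above, this establishes the martingale property with respect to $\{Z_n\}_{n=1}^\infty$.

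I do not anticipate any real obstacle here; essentially all the technical work is already contained in the derivation of \eqref{eq:individual_mart}, and the only new ingredient is the cancellation identity, which is where regularity enters. The one place to be careful is to write the sum over $\N_i$ (rather than $\N_i'$) in \eqref{eq:individual_mart} so that the double-counting identity applies cleanly; if one accidentally averages over $\N_i'$ the counting still works but requires noting that each node contributes to its own super urn with multiplicity one. I would state the proof using only $\N_i$ to keep the cancellation transparent.
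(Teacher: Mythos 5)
Your proposal is correct and follows essentially the same route as the paper: average the per-node identity \eqref{eq:individual_mart} over $i$ and show that the resulting double sum $\sum_i\sum_{j\in\N_i}(U_{j,n-1}-U_{i,n-1})$ vanishes under regularity. The only cosmetic difference is that you pull the constant factor $1/(|\N_i|+1)=1/(d+1)$ out first and cancel by a handshake/double-counting argument, whereas the paper keeps the degree-dependent denominators and cancels the $(i,j)$ and $(j,i)$ edge contributions pairwise, which reduces to the same observation once $|\N_i|=|\N_j|$.
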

\begin{proof}
We have, similar to Theorem~\ref{thm:martingale}, that
\begin{align*}
	&E[\tilde{U}_n\ | \ Z_{n-1}] \\
	&= \frac{1}{N}\sum_{i=1}^NE[U_{i,n}\ | \ Z_{n-1}] \\
	&= \frac{1}{N}\sum_{i=1}^N \left[ U_{i,n-1} + \frac{\Delta\sum_{j \in \N_i}\bah{(U_{j,n-1} - U_{i,n-1})}}{(T + n\Delta)(|\N_i| + 1)}\right] \\
	&= \tilde{U}_{n-1} + \sum_{i=1}^N\frac{\Delta\sum_{j \in \N_i}\bah{(U_{j,n-1} - U_{i,n-1})}}{N(T + n\Delta)(|\N_i| + 1)}.
\end{align*}
Let us examine the second term of the last equality. If this term is zero, \bah{$\{\tilde{U}_n\}_{n=1}^{\infty}$ is a martingale with respect to $\{Z_n\}_{n=1}^{\infty}$. We now define the adjacency matrix $[a_{ij}]$ of our network, where the $(i,j)$th entry $a_{ij}$ is 1 if $ (i,j) \in \E $, and 0 otherwise. Since we assumed that our network was undirected, $[a_{ij}]$ is symmetric, i.e., $a_{ij} = a_{ji}$ for all $i,j \in V$. So,}
\begin{align*}
	&\sum_{i=1}^N\frac{\Delta\sum_{j \in \N_i}\bah{(U_{j,n-1} - U_{i,n-1})}}{N(T + n\Delta)(|\N_i| + 1)} \\
	&= \sum_{i=1}^N\frac{\Delta\sum_{j=1}^N a_{ij}(U_{j,n-1} - U_{i,n-1})}{N(T + n\Delta)(|\N_i| + 1)} \\
	&= \frac{\Delta}{N(T + n\Delta)}\sum_{i=1}^N\sum_{j=1}^N\frac{a_{ij}(U_{j,n-1} - U_{i,n-1})}{|\N_i| + 1}
\end{align*}
Now, we examine the sum of the $(i,j)$ and $(j,i)$ components of the double sum, where $(i,j) \in \E$ (otherwise both terms are zero). Recall that $(i,i)\not\in \E$, $\forall i$. We have
\begin{align*}
  	&\frac{a_{ij}(U_{j,n-1} - U_{i,n-1})}{|\N_i| + 1} + \frac{a_{ji}(U_{i,n-1} - U_{j,n-1})}{|\N_j| + 1} \\
	&=\bah{\frac{a_{ij}(U_{j,n-1} - U_{i,n-1})(|\N_j|+1)}{(|\N_i| + 1)(|\N_j| + 1)}} \\
	&\bah{\quad+ \frac{a_{ji}(U_{i,n-1} - U_{j,n-1})(|\N_i|+1)}{(|\N_i| + 1)(|\N_j| + 1)}} \\
  	&=\bah{\frac{a_{ij}(|\N_j| - |\N_i|)}{(|\N_i| + 1)(|\N_j| + 1)}\Big(U_{j,n-1} - U_{i,n-1}\Big).}
  \label{eq:avg_prop_zero_term}
\end{align*}
From above, it is clear that this term is zero for all $i$ and $j$ by setting $|\N_j| = |\N_i|$, i.e. in any regular network, and so $\{\tilde{U}_n\}_{n=1}^{\infty}$ is a martingale with respect to $ \{Z_n\}_{n=1}^{\infty} $.
\end{proof}

We next allow the net number of black balls $\Delta_{b,i}(\cdot) $ to evolve stochastically in time as a function of the past draw history in the network in order to steer $ \{U_{i,n}\}_{n=1}^{\infty} $ to a limit for every node $ i $.

\begin{theorem}\longthmtitle{Individual Urn Proportion Categories}\label{thm:U_n_threshold}
In a general network $ \G = (V,\E) $ with $\Delta_{r,i}(n) = \Delta_r $ for all $ n \in \integerspositive $ and $ i \in V $, if we choose $\{\Delta_{b,i}(n)\}_{n=1}^{\infty}$ so that
\[
	\Delta_{b,i}(n) \geq \frac{\Delta_r(1 - U_{i,n-1})S_{i,n-1}}{U_{i,n-1}(1-S_{i,n-1})}
\]
almost surely for all $n \in \integerspositive $ and $ i \in V $ (resp. equal to, less than or equal to) then $ \{U_{i,n}\}_{n=1}^{\infty} $ is a supermartingale (resp. martingale, submartingale) with respect to $ \{Z_n\}_{n=1}^{\infty} $.
\end{theorem}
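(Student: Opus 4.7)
The plan is to compute $E[U_{i,n} \mid Z_{n-1}]$ directly from the one-step recursion for $U_{i,n}$, in the same spirit as Theorem~\ref{thm:martingale}. Conditional on the full past $Z^{n-1}$, the quantity $U_{i,n}$ takes one of exactly two deterministic values, depending on the outcome of $Z_{i,n}$: from~\eqref{eq:X_n}, if $Z_{i,n}=1$ then $U_{i,n} = (U_{i,n-1}X_{i,n-1}+\Delta_r)/(X_{i,n-1}+\Delta_r)$, and if $Z_{i,n}=0$ then $U_{i,n} = U_{i,n-1}X_{i,n-1}/(X_{i,n-1}+\Delta_{b,i}(n))$. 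Note that $\Delta_{b,i}(n)$ itself may depend on past draws, but it is $Z^{n-1}$-measurable and hence deterministic after conditioning.

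Weighting these two branches by $P(Z_{i,n}=1 \mid Z^{n-1}) = S_{i,n-1}$ from~\eqref{eq:cond_eq} and subtracting $U_{i,n-1}$, each branch telescopes to yield
\[
E[U_{i,n}-U_{i,n-1} \mid Z_{n-1}] = \frac{S_{i,n-1}\Delta_r(1-U_{i,n-1})}{X_{i,n-1}+\Delta_r} - \frac{(1-S_{i,n-1})U_{i,n-1}\Delta_{b,i}(n)}{X_{i,n-1}+\Delta_{b,i}(n)}.
\]
The standing assumption that every urn begins with strictly positive numbers of balls of both colours guarantees that $U_{i,n-1}, S_{i,n-1}\in(0,1)$ almost surely, so every denominator above is strictly positive and each of the two terms is non-negative; the first is the expected gain from a red draw and the second is the expected loss from a black draw.

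The key observation is that the subtracted term is monotonically non-decreasing in $\Delta_{b,i}(n)$, since the map $b\mapsto b/(X_{i,n-1}+b)$ is increasing on $[0,\infty)$. Hence increasing $\Delta_{b,i}(n)$ can only push the conditional drift downward. To extract the threshold stated in the claim, I would set the right-hand side above to zero, cross-multiply, and solve for $\Delta_{b,i}(n)$; the resulting critical value is the boundary of the martingale case, and the inequality direction on $\Delta_{b,i}(n)$ then transfers by monotonicity to the sign of the conditional drift, giving the supermartingale (resp.\ martingale, submartingale) classification.

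The step I expect to be the main obstacle is this algebraic rearrangement, since $\Delta_{b,i}(n)$ appears both in the numerator and in the denominator of the subtracted term, so isolating it requires careful handling, and one must then verify that the closed form agrees with the clean expression $\Delta_r(1-U_{i,n-1})S_{i,n-1}/(U_{i,n-1}(1-S_{i,n-1}))$ in the statement. As a sanity check, in the trivial isolated-node case $\N_i'=\{i\}$ one has $S_{i,n-1}=U_{i,n-1}$, and the threshold collapses to $\Delta_{b,i}(n)=\Delta_r$, recovering the symmetric classical Polya update and aligning with the conclusion of Theorem~\ref{thm:martingale}.
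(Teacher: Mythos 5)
Your one-step computation of the conditional drift is correct, and it is in fact more precise than what the paper uses: conditioned on the past, a red draw (probability $S_{i,n-1}$) changes $U_{i,n-1}$ by $\Delta_r(1-U_{i,n-1})/(X_{i,n-1}+\Delta_r)$ and a black draw by $-U_{i,n-1}\Delta_{b,i}(n)/(X_{i,n-1}+\Delta_{b,i}(n))$, so
\[
E[U_{i,n}-U_{i,n-1}\mid Z_{n-1}] = \frac{S_{i,n-1}\Delta_r(1-U_{i,n-1})}{X_{i,n-1}+\Delta_r} - \frac{(1-S_{i,n-1})U_{i,n-1}\Delta_{b,i}(n)}{X_{i,n-1}+\Delta_{b,i}(n)},
\]
and the monotonicity in $\Delta_{b,i}(n)$ holds as you say. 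The gap is in your final step: setting this exact drift to zero and solving for $\Delta_{b,i}(n)$ does \emph{not} return the threshold in the statement. Because the post-draw total $X_{i,n}$ is branch-dependent ($X_{i,n-1}+\Delta_r$ after a red draw, $X_{i,n-1}+\Delta_{b,i}(n)$ after a black one), the zero of your drift is
\[
\Delta_{b,i}(n) = \frac{\Delta_r(1-U_{i,n-1})S_{i,n-1}\,X_{i,n-1}}{U_{i,n-1}(1-S_{i,n-1})(X_{i,n-1}+\Delta_r)-\Delta_r(1-U_{i,n-1})S_{i,n-1}}
\]
(when the denominator is positive), and this coincides with $\Delta_r(1-U_{i,n-1})S_{i,n-1}/\bigl(U_{i,n-1}(1-S_{i,n-1})\bigr)$ only when $S_{i,n-1}=U_{i,n-1}$ (for $\Delta_r>0$) --- which is exactly the isolated-node sanity check you ran, explaining why it looked consistent. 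So the verification you flagged as ``the main obstacle'' will fail in general, and the plan as written cannot be completed to yield the stated bound.

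The paper's proof takes a different route: it factors $U_{i,n}-U_{i,n-1}=\frac{1}{X_{i,n}}\bigl[\Delta_r Z_{i,n}(1-U_{i,n-1})-\Delta_{b,i}(n)U_{i,n-1}(1-Z_{i,n})\bigr]$, discards the positive random factor $1/X_{i,n}$, and then checks the sign of $E\bigl[\Delta_r Z_{i,n}(1-U_{i,n-1})-\Delta_{b,i}(n)U_{i,n-1}(1-Z_{i,n})\mid Z_{n-1}\bigr] = \Delta_r S_{i,n-1}(1-U_{i,n-1})-\Delta_{b,i}(n)(1-S_{i,n-1})U_{i,n-1}$, whose zero is precisely the stated threshold. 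To arrive at the theorem as stated you must adopt that reduction rather than solve the exact drift. Note, though, that your computation shows the two quantities are genuinely different: dropping the branch-dependent factor $1/X_{i,n}$ preserves the pointwise sign of $U_{i,n}-U_{i,n-1}$ but not, in general, the sign of its conditional expectation, since the factor weights the red and black branches differently. So your more careful route does not reproduce the paper's argument; instead it isolates exactly the delicate step on which the paper's proof (and its clean threshold) relies.
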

\begin{proof}
We will start with the case of a supermartingale. That is, we wish to show that almost surely for all $n \in \integerspositive $,
\[
	 E[U_{i,n} \ | \ Z_{n-1}] - U_{i,n-1} \leq 0.
\]
Define $ \bZ{i,n} = \tsum Z_{i,t} $, and take $ X_{i,n} $ as in~\eqref{eq:X_n}. Then, we have almost surely
\begin{align*}
	&U_{i,n} - U_{i,n-1} \\
	&= \frac{R_i + \Delta_r(\bZ{i,n-1} + Z_{i,n})}{X_{i,n}} - \frac{R_i + \Delta_r\bZ{i,n-1}}{X_{i,n-1}} \\
	&= \frac{\Delta_r Z_{i,n}}{X_{i,n} } - \frac{(R_i + \Delta_r\bZ{i,n-1})(X_{i,n} - X_{i,n-1})}{X_{i,n-1}X_{i,n}} \\
	&= \frac{\Delta_r Z_{i,n}}{X_{i,n} } - \frac{U_{i,n-1}(X_{i,n} - X_{i,n-1})}{X_{i,n}}\\
	&= \frac{1}{X_{i,n}} \bigg[\Delta_r Z_{i,n} - U_{i,n-1}(\Delta_r Z_{i,n} +\Delta_{b,i}(n)(1-Z_{i,n}))\bigg],
\end{align*}
since $ X_{i,n} > 0 $ for all $ n \in \integerspositive $ almost surely, we can ignore it. Now, since $ U_{i,n-1} $ is almost surely constant given $Z_{n-1} $,
\[
	E[U_{i,n} \ | \ Z_{n-1}] - U_{i,n-1} \leq 0 \Rightarrow E[U_{i,n} - U_{i,n-1} \ | \ Z_{n-1}] \leq 0.
\]
That is we wish to check if, almost surely,
\[
E\big[\Delta_r Z_{i,n}(1- U_{i,n-1}) - \Delta_{b,i}(n)U_{i,n-1}(1-Z_{i,n}) | Z_{n-1}\big] \leq 0.
\]
Now if
\[
\Delta_{b,i}(n) \geq \frac{\Delta_r(1 - U_{i,n-1})S_{i,n-1}}{U_{i,n-1}(1-S_{i,n-1})}
\]
almost surely, we have
\begin{align*}
	&E\big[\Delta_r Z_{i,n}(1- U_{i,n-1}) - \Delta_{b,i}(n)U_{i,n-1}(1-Z_{i,n}) | Z_{n-1}\big] \\
	&\leq E\Bigg[\Delta_r Z_{i,n}(1- U_{i,n-1}) - U_{i,n-1}(1-Z_{i,n})\\
	&\qquad\qquad\times\frac{\Delta_r(1 - U_{i,n-1})S_{i,n-1}}{U_{i,n-1}(1-S_{i,n-1})} \Bigg| Z_{n-1} \Bigg] \\
	&= \Delta_r(1- U_{i,n-1}) E\left[Z_{i,n} - (1-Z_{i,n})\frac{S_{i,n-1}}{1-S_{i,n-1}} \bigg| Z_{n-1} \right] \\
	&= \Delta_r(1- U_{i,n-1})\left[S_{i,n-1} - (1-S_{i,n-1})\frac{S_{i,n-1}}{1-S_{i,n-1}} \right] \\
	&= 0,
\end{align*}
where the second to last equality comes from the fact that $ E[Z_{i,n} | Z_{n-1}] = P(Z_{i,n} = 1 | Z_{n-1}) = S_{i,n-1} $ almost surely by~\eqref{eq:cond_eq}, and that $S_{i,n-1}$ is almost surely constant given $Z_{n-1}$. Thus as long as $ \Delta_{b,i}(n) $ obeys this bound almost surely for all $ n \in \integerspositive $, $ \{U_{i,n}\}_{n=1}^{\infty} $ is a supermartingale with respect to $ \{Z_n\}_{n=1}^{\infty} $. Similarly, if $ \Delta_{b,i}(n) $ is almost surely equal (resp. less than or equal) to this bound, $ \{U_{i,n}\}_{n=1}^{\infty} $ is a martingale (resp. submartingale) with respect to $ \{Z_n\}_{n=1}^{\infty} $.
\end{proof}

Theorem~\ref{thm:U_n_threshold} tells us what bounds for $\{\Delta_{b,i}(n)\}_{n=1}^{\infty}$ must be obeyed almost surely to guarantee that $\{U_{i,n}\}_{n=1}^{\infty} $ admits an asymptotic limit for all $ i \in V $ in any general network. For instance, this tells us that by choosing $ \Delta_{b,i}(t) = 0 $ almost surely for all $ i \in V $ and $ t \in \integerspositive $, $\{U_{i,n}\}_{n=1}^{\infty} $ will be a submartingale and will converge to some limiting random variable. While this result is interesting for modelling contagion, it is especially useful in the context of curing.

\section{Model Approximations}\label{sec:model_approximations}

As previously noted, the dynamics of the network contagion process are complicated, especially when considered on general networks. For this reason, in this section we develop two useful approximations to this process on a general network that allow us to shed some light on its asymptotic behaviour. Throughout this section, unless stated otherwise, we consider general network topologies with $\bah{\Delta_{r,i}(t)} = \Delta_{b,i}(t) = \Delta $ for all $t \in \integerspositive $ and $i \in V$. However, to match the $1$-step and $(n,1)$-step distributions, we make the simplifying assumption that the neighbourhood of each node $ i $ can be represented as a complete network, i.e., all of its neighbours are connected to one another, in order to apply Lemmas~\ref{lem:approx_1_diml} and~\ref{lem:approx_n1_diml}.

\subsection{Approximation: Computational Model}\label{subsec:approx_1}
We now introduce our first approximation technique, where we approximate the contagion process of each node in the network with a classical Polya urn process.
 
\begin{model}{I}\longthmtitle{Computational Model}\label{mod:comp}
We approximate the dynamics of any node $i$'s contagion process using a classical Polya process $\Polya(\rho_c=\rho_i,\delta_c=\dhat_i)$, with
\begin{align*}
\rho_i = \frac{\sum_{j \in \N_i^{'}} R_j}{\sum_{j \in \N_i^{'}} T_j}, \quad\mathrm{and}\quad
 \dhat_i = \argmin_{\tilde{\delta}}\frac{1}{n}D\left(P_{i,n}^{(n)}|| Q_{\rho_i,\tilde{\delta}}^{(n)}\right),
\end{align*}
with
\begin{align*}
&Q_{\rho_i,\tilde{\delta}}^{(n)}(a^n) = \frac{\Gamma\left(\frac{1}{\tilde{\delta}}\right)\Gamma\left(\frac{\rho_i}{\tilde{\delta}} + \bar{a}^n\right)\Gamma\left(\frac{1-\rho_i}{\tilde{\delta}} + n - \bar{a}^n\right)}{\Gamma\left(\frac{1}{\tilde{\delta}} + n\right)\Gamma\left(\frac{\rho_i}{\tilde{\delta}}\right)\Gamma\left(\frac{1-\rho_i}{\tilde{\delta}}\right)},
\end{align*}
where $ \Gamma(\cdot) $ is the Gamma function, $ a^n = (a_1,...,a_n)\in\{0,1\}^n $, and $\bar{a}^n = a_1+\cdots+a_n $. \oprocend
\end{model}

Here $\rho_c$ is chosen to be the proportion of red balls $\rho_i$ in the node's super urn, so that the $1$-dimensional distributions of the classical Polya process and the node process $\{Z_{i,n}\}$ coincide, while $\dhat_i$ is set by performing a minimization to find the value that best fits $Q_{\rho_i,\dhat_i}^{(n)}$ to the distribution of $\Zprocess{i,n}$ of node $i \in V$. We use a divergence measure, denoted by $ D(\cdot || \cdot) $, to observe the quality of the fit.

The explicit derivation of the distribution $Q_{\rho_i,\dhat_i}^{(n)}$ can be found in~\cite{WF:71,NLJ-SK:77}. This method ensures that the fit of $Q_{\rho_i,\dhat_i}^{(n)}$ is as close as possible under the given divergence measure. Since we are measuring the error in using an approximating distribution, we use the Kullback-Leibler divergence~\cite{TC-JT:06}; we thus have that
\begin{align*}
	\dhat_i &= \argmin_{\tilde{\delta}}\frac{1}{n}\sum_{a^n\in\{0,1\}^n}P_{i,n}^{(n)}(a^n)\log{\frac{P_{i,n}^{(n)}(a^n)}{Q_{\rho_i,\tilde{\delta}}^{(n)}a^n)}} \\
		&= \argmax_{\tilde{\delta}}\frac{1}{n}\sum_{a^n\in\{0,1\}^n}P_{i,n}^{(n)}(a^n)\log{Q_{\rho_i,\tilde{\delta}}^{(n)}(a^n)} 
\end{align*}
since $P_{i,n}^{(n)}(a^n)\log{P_{i,n}^{(n)}(a^n)}$ is independent of $\tilde{\delta}$. The approximating process is stationary and exchangeable, as it is a classical Polya process. We also know (from Section~\ref{sec:preliminaries}) that it is non-ergodic with its sample average converging almost surely to the $\Betafun(\frac{\rho_i}{\dhat_i},\frac{1-\rho)i}{\dhat_i})$ distribution. Calculating an analytic expression for the minimizing $\dhat_i$ is not feasible in general, and hence should be performed computationally. However, due to the above minimization, the value of $\dhat_i$ is, by definition, the best way to fit a Polya process to the process $\Zprocess{i,n}$ for a given $n$. 

\subsection{Approximation: Analytical Models}\label{subsec:approx_2}
An alternative to Model~\ref{mod:comp} is to attempt to find approximations whose parameters can be determined analytically. 

\begin{model}{II(a)}\longthmtitle{Large-Network Analytical Model}\label{mod:anal}
For any given node $i$, we approximate the dynamics of its process $\Zprocess{i,n}$ by using a classical Polya process $\Polya(\rho_c=\rho_i,\delta_c=\delp_i)$, with
\begin{align*}
	\rho_i= \frac{\sum_{j \in \N_i^{'}} R_j}{\sum_{j \in \N_i^{'}} T_j}, \quad \mathrm{and} \quad
	\delp_i =\frac{\delta_i}{N+(N-1)\delta_i},
\end{align*}
where $\delta_i=\frac{N \Delta}{\sum_{j\in \N_i^{'}} T_j}$. \oprocend
\end{model}

Here the parameters of the classical Polya process are chosen
by directly matching its first and $(n,1)$-step second-order statistics with those of $\Zprocess{i,n}$. This method avoids the computational burden of the previous model by yielding an analytical expression for the correlation parameter of the classical Polya process.

We next prove that under some stationarity and symmetry assumptions, the contagion process running on each node in the network is statistically identical to the classical Polya process of Model~\ref{mod:anal}.

\begin{lemma}\longthmtitle{Exact Representation}\label{lem:dist_assump}
Suppose that
\begin{enumerate}
	\item\label{lem:it:1} $P(Z_{i,1}=1 \ | \  Z_{j,1}^{n-1}=a^{n-1}) = \rho_i$, and
	\item\label{lem:it:2} $P(Z_{i,t}=1  |   Z_{j,1}^{n-1}=a^{n-1}) = P(Z_{k,n} = 1  | Z_{j,1}^{n-1}=a^{n-1})$,
\end{enumerate}
for all $n\geq 1, 2 \leq t < n$, $i,j,k\in V$, $a^{n-1}\in \{0,1\}^{n-1}$. Then for any node $i$ in a complete network,
$\Zprocess{i,n}$ is given exactly by the $\Polya(\rho_i,\delp_i)$ process.
\end{lemma}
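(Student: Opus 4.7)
The plan is to show that the conditional distribution of $Z_{i,n}$ given $Z_{i,1}^{n-1}=a^{n-1}$ coincides with the classical Polya conditional $(\rho_i+\delta'_i\bar{a}^{n-1})/(1+(n-1)\delta'_i)$, where $\bar{a}^{n-1}=\sum_{t=1}^{n-1}a_t$. Since both processes are fully determined by their one-step conditionals via the chain rule (for the network process through~\eqref{eq:joint_eq}), this equality forces the $n$-fold joint distributions to agree. The $n=1$ base case is Lemma~\ref{lem:approx_1_diml}.

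First I would apply the law of total probability over the other nodes' histories and substitute the super-urn expression~\eqref{eq:cond_eq} for the inner conditional probability. In a complete network, $S_{i,n-1}$ is an affine function of $\bar{a}^{n-1}$ and of $\sum_{j\neq i}\sum_{t=1}^{n-1}b_{j,t}$ with coefficients determined by $\rho_i=\bar{R}/\bar{T}$ and $\delta_i=N\Delta/\bar{T}$. Exchanging the order of summation, as in the derivation leading to~\eqref{eq:marg}, reduces the problem to evaluating the marginal conditional probabilities $P(Z_{j,t}=1\mid Z_{i,1}^{n-1}=a^{n-1})$ for $j\neq i$ and $1\leq t\leq n-1$.

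This is where the two hypotheses do the essential work. Assumption~(\ref{lem:it:1}), with the roles of $i$ and $j$ swapped, gives $P(Z_{j,1}=1\mid Z_{i,1}^{n-1}=a^{n-1})=\rho_j=\rho_i$, where the last equality uses that all super-urn ratios coincide in the complete case. Assumption~(\ref{lem:it:2}), applied with $i$ as the conditioning node and with the time-$n$ target index on the right-hand side also chosen to be $i$, collapses every remaining probability $P(Z_{j,t}=1\mid Z_{i,1}^{n-1}=a^{n-1})$ for $2\leq t\leq n-1$ into the single unknown $q:=P(Z_{i,n}=1\mid Z_{i,1}^{n-1}=a^{n-1})$. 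Substituting back yields a linear self-consistency equation in $q$ of the form
\[
q\bigl[1+(n-1)\delta_i\bigr] = \rho_i + \tfrac{\delta_i}{N}\bar{a}^{n-1} + \tfrac{\delta_i}{N}(N-1)\bigl[\rho_i+(n-2)q\bigr].
\]

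The remaining task is a routine but delicate piece of algebra: solve this equation for $q$ and verify that the result simplifies exactly to the claimed Polya form. I expect this final simplification to be the main obstacle, since the coefficient $N+N(n-1)\delta_i-(N-1)(n-2)\delta_i$ appearing on the left must combine with the factor $N+(N-1)\delta_i$ weighting $\rho_i$ on the right in such a way that, after dividing numerator and denominator by $N+(N-1)\delta_i$, one recovers precisely the correlation parameter $\delta'_i=\delta_i/(N+(N-1)\delta_i)$ of Model~\ref{mod:anal}. Once this verification is carried out, the chain rule immediately identifies $\{Z_{i,n}\}_{n=1}^{\infty}$ with $\Polya(\rho_i,\delta'_i)$.
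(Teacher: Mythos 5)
Your proposal is correct and follows essentially the same route as the paper's proof: condition on the other nodes' histories, sum them out to reduce to the marginal conditionals $P(Z_{j,t}=1\mid Z_{i,1}^{n-1}=a^{n-1})$, use assumption (i) for $t=1$ and assumption (ii) to collapse the $2\le t\le n-1$ terms into the single unknown $q$, and solve the resulting linear self-consistency equation, whose solution indeed simplifies to $\bigl(\rho_i+\delta'_i\bar{a}^{n-1}\bigr)/\bigl(1+(n-1)\delta'_i\bigr)$ exactly as you anticipate (the paper merely switches to the $\delta'_i$ parametrization earlier in the computation rather than at the end). Your explicit appeal to Lemma~\ref{lem:approx_1_diml} for the time-$1$ marginal in the chain-rule step is a small point the paper leaves implicit, but it does not change the argument.
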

\begin{proof}
For any node $i$, we wish to show that for all $n$, the $n$-dimensional distributions of $\Zprocess{i,n}$ and the $\Polya(\rho_i,\delp_i)$ process are identical. It is enough to show that the conditional probability of one event given the whole past is the same, since any joint probability can be written as a product of conditional probabilities. Let us define the events $ A_{n-1} = \{Z_{i,1}^{n-1}=a^{n-1}\} $ and $ B_{n-1} = \{Z_{j,1}^{n-1}=b_{j,1}^{n-1}\}_{j\neq i}$. Then,
\begin{align*}
&P_{i|n} := P(Z_{i,n}=1\ | \ A_{n-1}) \\
	&= \qquad\sum_{\mathclap{\substack{b_{j,1}^{n-1} \in\{0,1\}^{n-1}:j\neq i}}} P(Z_{i,n}=1 \ | \ A_{n-1} ,B_{n-1}) P(B_{n-1}\ | \ A_{n-1}) \\
	&= \sum_{\mathclap{\substack{b_{j,1}^{n-1}:j\neq i}}} \frac{\rho_i+\frac{\delta_i}{N}\sum_{t=1}^{n-1}(a_t + \sum_{j\neq i}b_{j,t})}{1+(n-1)\delta_i} P(B_{n-1}\ | \ A_{n-1}) \cr
	&= \sum_{\mathclap{\substack{b_{j,1}^{n-1}:j\neq i}}} \frac{\rho_i(1-(N-1)\delp_i)+\delp_i\sum_{t=1}^{n-1}(a_t + \sum_{j\neq i}b_{j,t})}{1+(N(n-2)+1)\delp_i} \\
	&\qquad\times P(B_{n-1} \ | \ A_{n-1}) \cr
	&= \frac{\rho_i(1-(N-1)\delp_i)+\delp_i\sum_{t=1}^{n-1}a_t}{1+(N(n-2)+1)\delp_i}\sum_{\mathclap{\substack{b_{j,1}^{n-1}:j\neq i}}}P(B_{n-1}\ | \ _{n-1}A) \\
	&+ \frac{\delp_i}{1+(N(n-2)+1)\delp_i}\sum_{t=1}^{n-1}\sum_{j\neq i}\enskip\enskip\sum_{\mathclap{\substack{b_{j,1}^{n-1}:j\neq i}}} b_{j,t}P(B_{n-1}\ | \ A_{n-1}) \\
	&= \frac{\left(\rho_i(1-(N-1)\delp_i)+\delp_i\sum_{t=1}^{n-1}a_t\right)\cdot 1}{1+(N(n-2)+1)\delp_i} \\
	&\quad+\frac{\delp_i\sum_{t=1}^{n-1}\sum_{j\neq i}P(Z_{j,t}=1\ | \ A_{n-1})}{1+(N(n-2)+1)\delp_i}.
\end{align*}
Then using assumption~\ref{lem:it:1}, we have
\begin{align*}
	P_{i|n} &= \frac{\rho_i(1-(N-1)\delp_i)+\delp_i\sum_{t=1}^{n-1}a_t +\delp_i(N-1)\rho_i}{1+(N(n-2)+1)\delp_i} \\
	&\quad+ \frac{\delp_i\sum_{t=2}^{n-1}\sum_{j\neq i}P(Z_{j,t}=1\ | \ A_{n-1})}{1+(N(n-2)+1)\delp_i} \\
	&= \frac{\rho_i+\delp_i\sum_{t=1}^{n-1}\left[a_t + \sum_{j\neq i}P(Z_{j,t}=1\ | \ A_{n-1})\right]}{1+(N(n-2)+1)\delp_i}
\end{align*}
Now using assumption~\ref{lem:it:2}, we get
\begin{align*}
P_{i|n}	&= \frac{\rho_i+\delp_i\left(\sum_{t=1}^{n-1}a_t +\sum_{t=2}^{n-1}\sum_{j\neq i}P_{i|n}\right)}{1+(N(n-2)+1)\delp_i} \cr
	&= \frac{\rho+\delp_i\left(\sum_{t=1}^{n-1}a_t +(n-2)(N-1)P_{i|n}\right)}{1+(N(n-2)+1)\delp_i}.
\end{align*}
Thus, we have that
\begin{align*}
&P_{i|n} 	=  \frac{\rho_i+\delp_i\left(\sum_{t=1}^{n-1}a_t +(n-2)(N-1)P_{i|n}\right)}{1+(N(n-2)+1)\delp_i} \\
\Rightarrow& P_{i|n} = \frac{\rho_i+\delp_i\sum_{t=1}^{n-1}a_t}{1+(n-1)\delp_i},
\end{align*}
which is the conditional probability $P(Z_n=1 | Z_1^{n-1}=a^{n-1})$ for a $\Polya(\rho_i,\delp_i)$ process. A similar calculation can be performed for $P(Z_{i,n}=0\ | \ Z_{i,1}^{n-1}=a^{n-1})$.
\end{proof}

Unfortunately in a general network setting assumptions~\ref{lem:it:1} and~\ref{lem:it:2} above do not hold true. However, this result motivates the fact that this analytical approximation is reasonable to use for situations where these \bah{assumptions} hold within tolerable margins of error; empirical evidence indicates that this occurs for large values of $N$, since as $N$ increases the quality of the fit improves. This approximation, nevertheless, drastically reduces the complexity in analyzing the individual contagion draw processes, as closed-form expressions for the process parameters are available.

\begin{model}{II(b)}\longthmtitle{Small-Network Analytic Model}\label{mod:anal_2}
Given any node $i$ in the network with a small to moderate number of nodes, we approximate the dynamics of its contagion process $\Zprocess{n}$ using a $\Polya(\rho_i, \dels_i)$ process, where
\begin{align*}
	\rho_i &= \bah{\frac{\sum_{j \in \N_i^{'}} R_j}{\sum_{j \in \N_i^{'}} T_j},} \quad \mathrm{and} \\
\dels_i &=\frac{\delta_i/N}{N+(N-1)\delta_i/N} = \frac{\delta_i}{N^2+(N-1)\delta_i},
\end{align*}
where $\delta_i=\frac{N \Delta}{\sum_{j\in \N_i^{'}} T_j}$.\oprocend
\end{model}
The idea behind this model is that we want to remove the dependence on the number of nodes $N$ from the parameter $\delta_i = \frac{N\Delta}{\bT_i}$, and so we divide each instance of $\delta_i$ in $\dels_i$ by $N$. \bah{The idea is that as $n$ grows, it eventually becomes significantly larger than the relatively small number of nodes $N$, and so $n|\N_i| \approx n$ for all $ i \in V $. Hence, we may consider that for a sufficiently large time, we have added $n\Delta$ balls to the super urn. Effectively, this means we are using a correlation parameter of $\frac{\Delta}{\bT_i}$ instead of $\delta = \frac{N\Delta}{\bT_i}$. Simulation results confirm that this approximation captures the limit distribution of the original process better than Model~\ref{mod:anal} when the number of nodes is small.} Figure~\ref{fig:hist_comp} displays this relationship. A summary of all models presented in this section, and the scenarios under which they are most suitable, is provided in Table~\ref{tab:approx_usage}.
\begin{table}[ht!]
	\centering
	\caption{Approximation Usage Scenarios}
	\label{tab:approx_usage}
	\begin{tabular}{c | l}
		Model & Usage Scenario \\
		\hline
		~\ref{mod:comp} & Exactness valued over analytic simplicity \\
		~\ref{mod:anal} & Larger values of $N$, i.e., large network \\
		~\ref{mod:anal_2} & Small to moderate values of $N$, i.e., small network
	\end{tabular}
\end{table}
\begin{figure*}
\centering
{
	\psfrag{0.15}[rr][rr]{\scriptsize{}}
	\psfrag{0.2}[cc][cb]{\scriptsize{0.2}}
	\psfrag{0.25}[rr][rr]{\scriptsize{}}
	\psfrag{0.3}[cc][cb]{\scriptsize{0.3}}
	\psfrag{0.35}[rr][rr]{\scriptsize{}}
	\psfrag{0.4}[cc][cb]{\scriptsize{0.4}}
	\psfrag{0.45}[rr][rr]{\scriptsize{}}
	\psfrag{0.5}[cc][cb]{\scriptsize{0.5}}
	\psfrag{0.55}[rr][rr]{\scriptsize{}}
	\psfrag{0.6}[cc][cb]{\scriptsize{0.6}}
	\psfrag{0.65}[rr][rr]{\scriptsize{}}
	\psfrag{0}[rr][rr]{\scriptsize{0}}
	\psfrag{5}[rr][rr]{\scriptsize{5}}
	\psfrag{15}[rr][rr]{\scriptsize{15}}
	\psfrag{Empirical}[lc][lc]{\scriptsize{Empirical}}
	\psfrag{Model I}[lc][lc]{\scriptsize{Model I}}
	\psfrag{Model II(a)}[lc][lc]{\scriptsize{Model II(a)}}
	\psfrag{Model II(b)}[lc][lc]{\scriptsize{Model II(b)}}
\subfigure[10-node complete network histogram.]{ \includegraphics[width=0.4\linewidth, height=0.15\textheight]{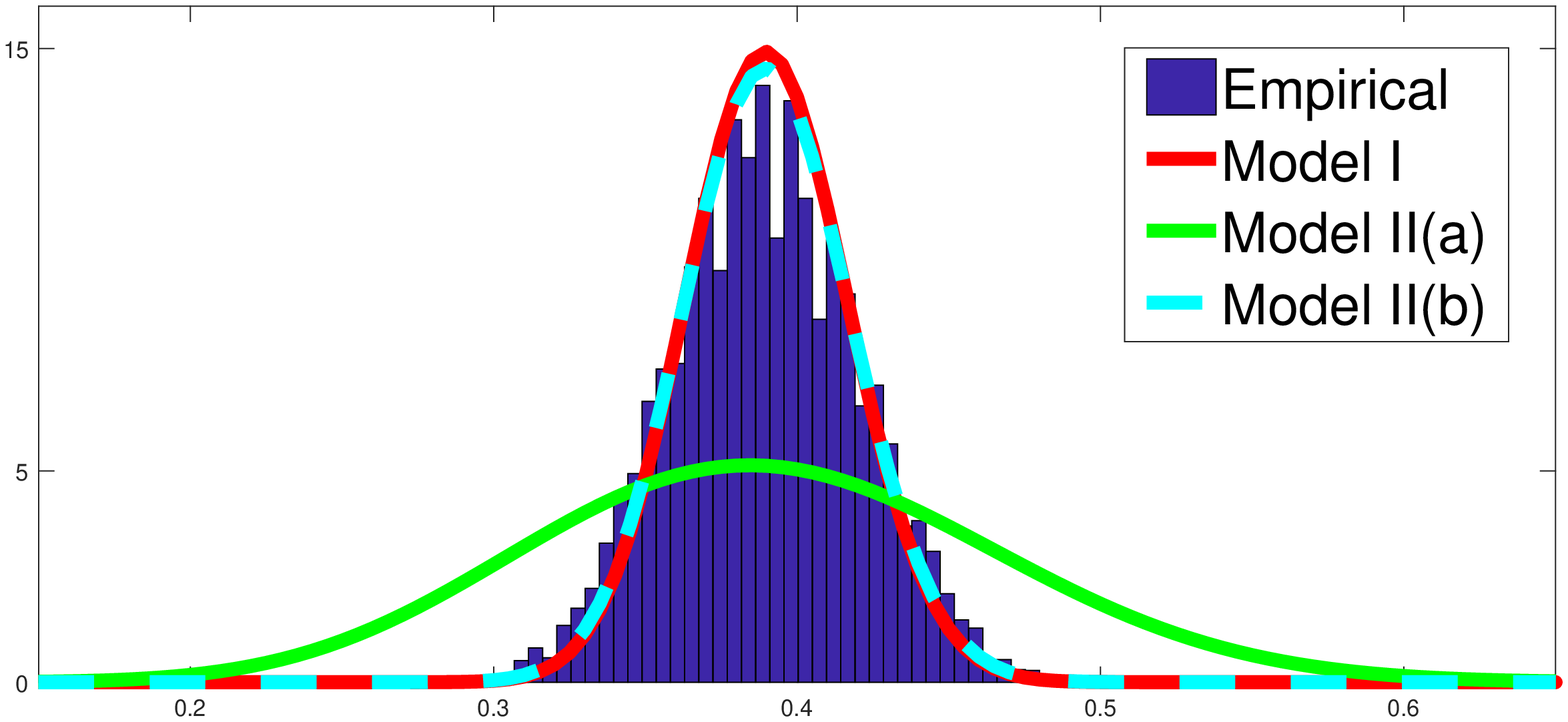} \label{subfig:10_compl}}
}
\qquad
{
	\psfrag{0.45}[cc][cb]{\scriptsize{0.45}}
	\psfrag{0.5}[cc][cb]{\scriptsize{0.5}}
	\psfrag{0.55}[cc][cb]{\scriptsize{0.55}}
	\psfrag{25}[rr][rr]{\scriptsize{25}}
	\psfrag{150}[rr][rr]{\scriptsize{150}}
	\psfrag{Empirical}[lc][lc]{\scriptsize{Empirical}}
	\psfrag{Model I}[lc][lc]{\scriptsize{Model I}}
	\psfrag{Model II(a)}[lc][lc]{\scriptsize{Model II(a)}}
	\psfrag{Model II(b)}[lc][lc]{\scriptsize{Model II(b)}}
\subfigure[100-node complete network histogram.]{ \includegraphics[width=0.4\linewidth, height=0.15\textheight]{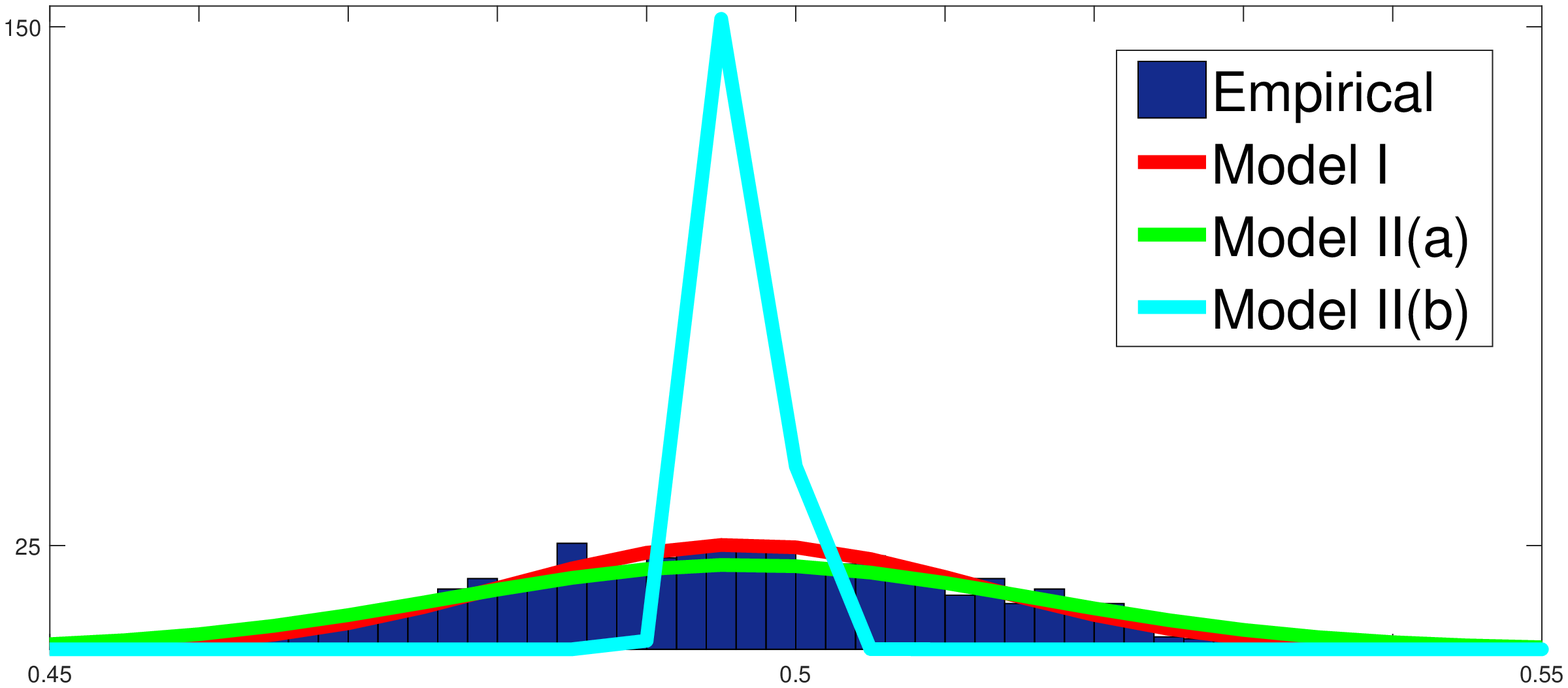} \label{subfig:100_compl}}
}
{
	\psfrag{0}[rr][rr]{\scriptsize{0}}
	\psfrag{0.3}[cc][cb]{\scriptsize{0.3}}
	\psfrag{0.6}[cc][cb]{\scriptsize{0.6}}
	\psfrag{0.9}[cc][cb]{\scriptsize{0.9}}
	\psfrag{2}[rr][rr]{\scriptsize{2}}
	\psfrag{4}[rr][rr]{\scriptsize{4}}
	\psfrag{4.5}[rr][rr]{\scriptsize{4.5}}
	\psfrag{Empirical}[lc][lc]{\scriptsize{Empirical}}
	\psfrag{Model I}[lc][lc]{\scriptsize{Model I}}
	\psfrag{Model II(a)}[lc][lc]{\scriptsize{Model II(a)}}
	\psfrag{Model II(b)}[lc][lc]{\scriptsize{Model II(b)}}
\subfigure[5-node Barabasi-Albert network histogram.]{ \includegraphics[width=0.4\linewidth, height=0.15\textheight]{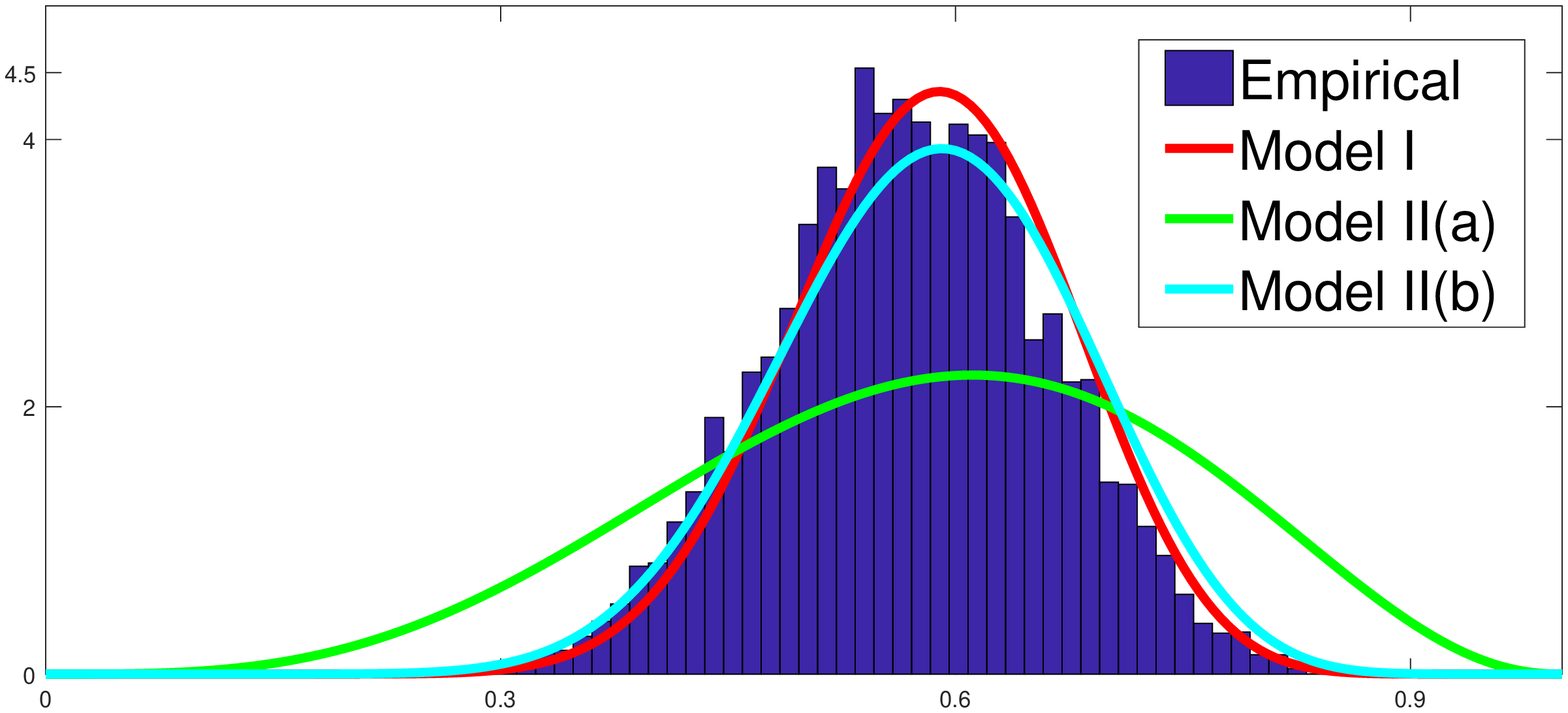} \label{subfig:5_ba}}
}
\qquad
\subfigure[5-node Barabasi-Albert network~\cite{RA-ALB:02}.]{ \includegraphics[width=0.4\linewidth, height=0.15\textheight]{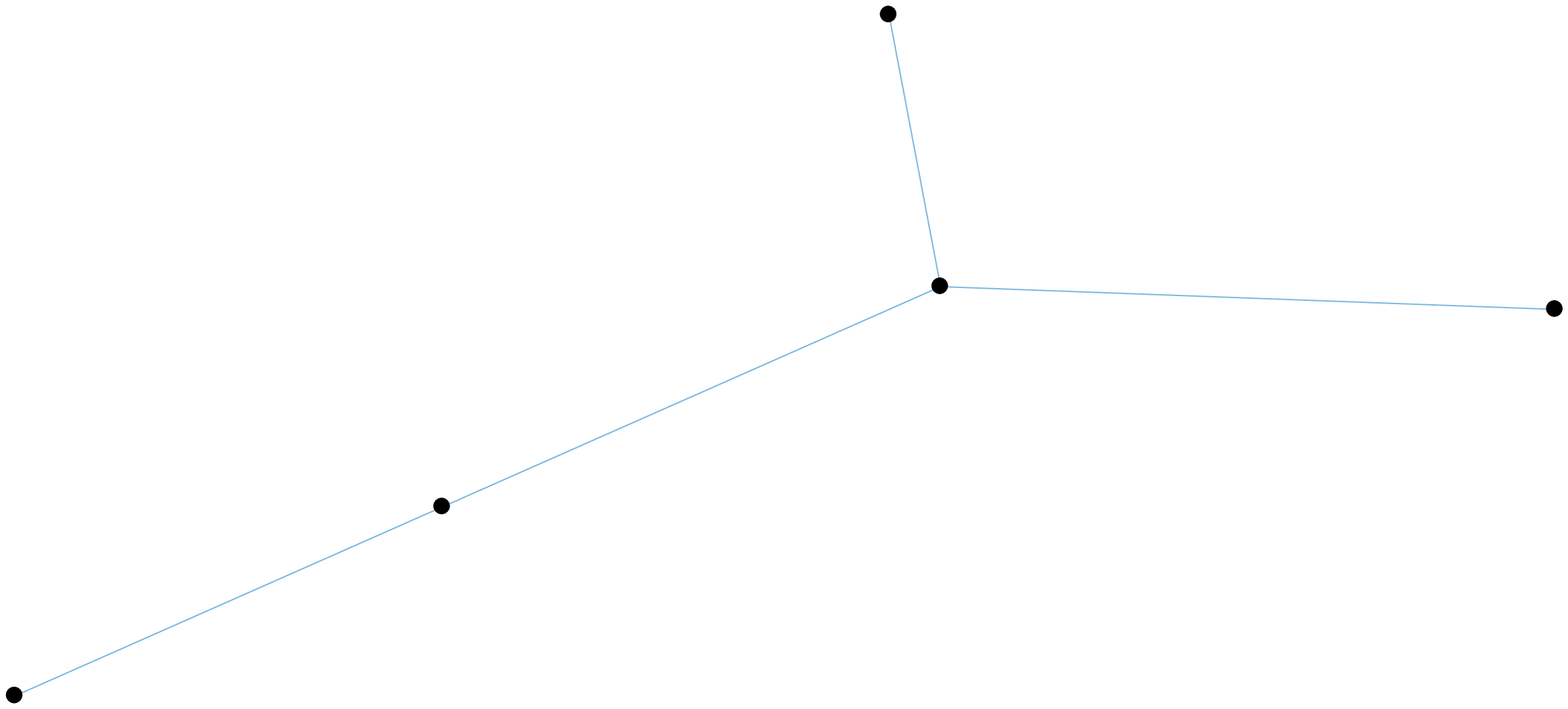} \label{subfig:5_ba_network}} \\
{
	\psfrag{0.3}[cc][cb]{\scriptsize{0.3}}
	\psfrag{0.45}[cc][cb]{\scriptsize{0.45}}
	\psfrag{0.5}[cc][cb]{\scriptsize{0.5}}
	\psfrag{0.55}[cc][cb]{\scriptsize{0.55}}
	\psfrag{0.7}[cc][cb]{\scriptsize{0.7}}
	\psfrag{0}[rr][rr]{\scriptsize{0}}
	\psfrag{5}[rr][rr]{\scriptsize{5}}
	\psfrag{25}[rr][rr]{\scriptsize{25}}
	\psfrag{Empirical}[rr][rr]{\scriptsize{Empirical}}
	\psfrag{Empirical}[lc][lc]{\scriptsize{Empirical}}
	\psfrag{Model I}[lc][lc]{\scriptsize{Model I}}
	\psfrag{Model II(a)}[lc][lc]{\scriptsize{Model II(a)}}
	\psfrag{Model II(b)}[lc][lc]{\scriptsize{Model II(b)}}
\subfigure[100-node Barabasi-Albert network histogram.]{ \includegraphics[width=0.4\linewidth, height=0.15\textheight]{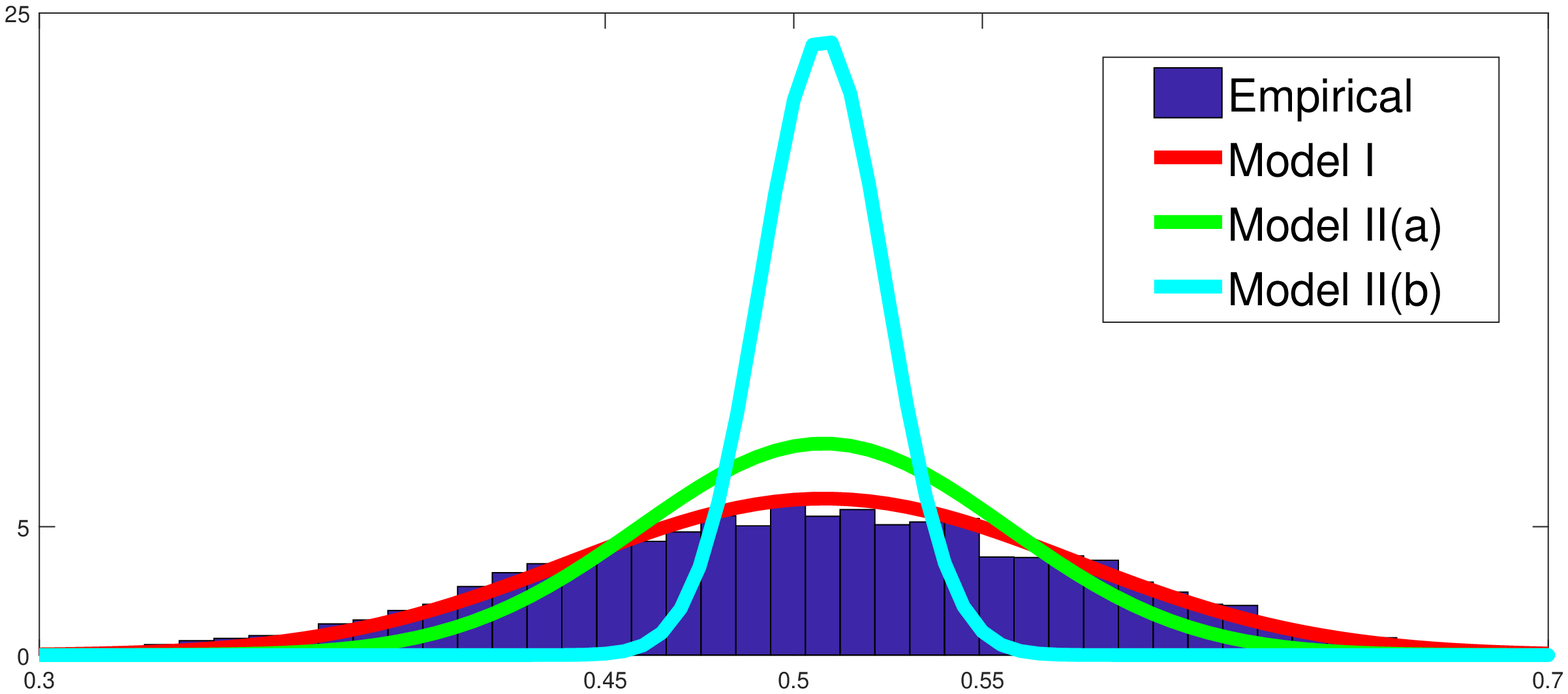} \label{subfig:100_ba}}
}
\qquad
\subfigure[100-node Barabasi-Albert network~\cite{RA-ALB:02}.]{ \includegraphics[width=0.4\linewidth, height=0.15\textheight]{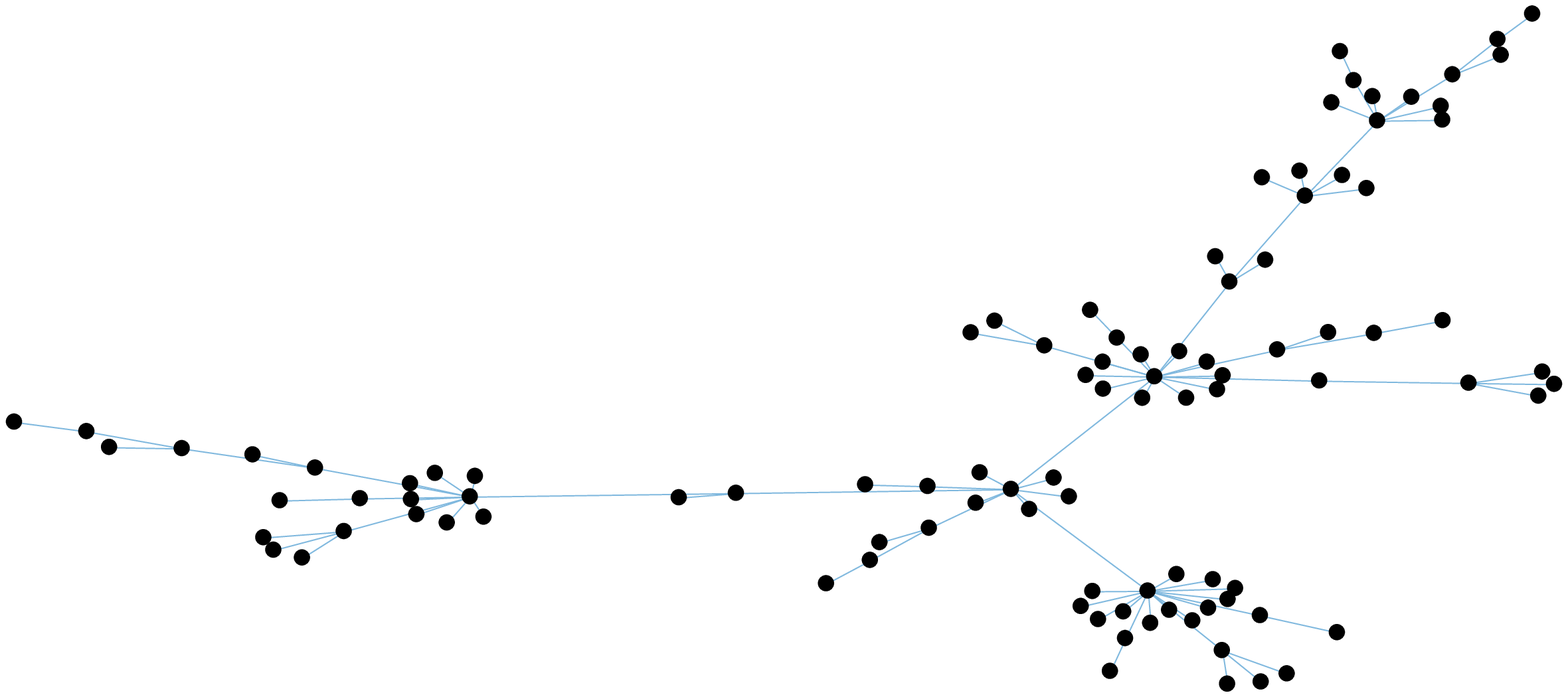} \label{subfig:100_ba_network}}
\caption{Comparison of normalized simulated histograms for the sample average of draws $ \frac{1}{n}\sum_{t=1}^n Z_{i,t} $ and the $\Betafun(\frac{\rho_i}{\dhat_i},\frac{1-\rho_i}{\dhat_i}), \Betafun(\frac{\rho_i}{\delp_i},\frac{1-\rho_i}{\delp_i})$, and $\Betafun(\frac{\rho_i}{\dels_i},\frac{1-\rho_i}{\dels_i}) $ pdfs from Models~\ref{mod:comp},~\ref{mod:anal}, and~\ref{mod:anal_2}, respectively, for an arbitrary node $ i $ with $n = 1000$, averaged over $5,000$ simulated trials. \bah{Here the parameters $ \Delta $ as well as $R_i$ and $B_i$, for all $ i \in V $,  were uniformly randomly assigned for each network, and were consistent throughout all trials. See \href{http://bit.ly/2tnBix5}{\color{blue}http://bit.ly/2tnBix5} for a complete list of all parameters used for each network.}}
\label{fig:hist_comp}
\vspace{-2.5mm}
\end{figure*}

We close this section with numerical demonstrations on the fitness of all models. Figure~\ref{fig:hist_comp} shows a representative comparison between the $\Betafun(\frac{\rho_i}{\delp_i},\frac{1-\rho_i}{\delp_i})$ pdf and the simulated histogram of $\frac{1}{n}\sum_{t=1}^n Z_{i,n}$, where $ n=1000 $, for an arbitrary node $i$ in the given networks. Recall that the $\Betafun(\frac{\rho_i}{\dhat_i},\frac{1-\rho_i}{\dhat_i})$, $\Betafun(\frac{\rho_i}{\delp_i},\frac{1-\rho_i}{\delp_i})$ and $\Betafun(\frac{\rho_i}{\dels_i},\frac{1-\rho_i}{\dels_i})$ pdfs are the distributions of the limit random variables to which the sample average of the draw processes of Models~\ref{mod:comp},~\ref{mod:anal} and~\ref{mod:anal_2} (respectively) converge almost surely, as $ n \rightarrow \infty $ (see Section~\ref{sec:preliminaries}). We use complete networks since they satisfy the assumption that all neighbourhoods are complete, as well as Barabasi-Albert networks which have been shown to be a good model for real-world social networks~\cite{RA-ALB:02} and do not satisfy this assumption; however, our results show that the approximations still fit quite well. As expected, Model~\ref{mod:comp} provides the best approximation in all scenarios, albeit without an analytic expression for its parameters which can provide insight into the behaviour of the underlying process. Model~\ref{mod:anal} fits quite well when the number of nodes in the network is large, as seen in Figures~\ref{subfig:100_compl} and~\ref{subfig:100_ba}, but fits poorly for a small number of nodes, which is evident in Figures~\ref{subfig:10_compl} and~\ref{subfig:5_ba}. Model~\ref{mod:anal_2} is the complement of Model~\ref{mod:anal} in the sense that it fits very well for a small number of nodes but poorly for a large network. Hence if analytic expressions for parameters are desired, Models~\ref{mod:anal} and~\ref{mod:anal_2} can be used depending on the number of nodes to provide approximations that are marginally worse than the computational exactness of Model~\ref{mod:comp}.

\subsection{Comparison with SIS model}\label{subsec:SIS}

We now provide a number of empirical results in which we compare our model, with both finite and infinite memory, to the traditional discrete time SIS model~\cite{YW-DC-CW-CF:03}. In the SIS model, the parameter $ \dsis $ denotes the probability that a node will recover from infection, and $ \bsis $ is the probability that a node will become infected through contact with a single infected neighbour. The dynamics are described through the probability that any node $ i $ will be infected at time $ t $, $ P_i(t) $, which evolves according to the equation
\begin{align*}\label{eq:SIS}
	&P_i(t+1) \\
	=& P_i(t)(1 - \dsis) + (1 - P_i(t))\Big(1 - \prod_{j\in\N_i}(1 - \bsis P_j(t))\Big).\quad
\end{align*}
Note in particular that this model exhibits Markovian behaviour, since the evolution of the process depends only on the probability of infection from the previous time step. We make the simplifying assumption that $ \dsis $ and $ \bsis $ remain the same for all nodes and throughout time, and hence we will compare it with the network Polya contagion process when $ \Delta_r $ and $ \Delta_b $ are similarly fixed in time and throughout the network.

The concept of an epidemic threshold for the SIS model gives a value through which one may determine whether the epidemic dies, a priori using only the system parameters~\cite{YW-DC-CW-CF:03}. The threshold condition is directly related to the largest-magnitude eigenvalue $ \lambda_{max} $ of the adjacency matrix of the underlying graph of the network, and states that if $ \dsis > \bsis\lambda_{max} $ then the epidemic will be eliminated after some time $ n $, i.e., eventually $ P_i(t) = 0 $ for all $ i $ and all $ t > n $. Furthermore, it has been shown that this threshold is tight, and indeed if $ \dsis < \bsis\lambda_{max} $ then some non-zero convergence point exists, called an \emph{endemic state}, and the epidemic will never be eliminated~\cite{HA-BH:13}.

\begin{figure*}[!ht]
\centering
{
	\psfrag{500}[cc][cb]{\scriptsize{$ 500 $}}
	\psfrag{1000}[cc][cb]{\scriptsize{$ 1,000 $}}
	\psfrag{0.475}[rr][rr]{\scriptsize{$ 0.475 $}}
	\psfrag{0.75}[rr][rr]{\scriptsize{$ 0.75 $}}
	\psfrag{0.9}[rr][rr]{\scriptsize{$ 0.9 $}}
	\psfrag{1}[rr][rr]{\scriptsize{$ 1 $}}
	\psfrag{SIS average}[lc][lc]{\scriptsize{SIS average}}
	\psfrag{Memory 50 I}[lc][lc]{\scriptsize{Memory $ 50 $ $ \tilde{I}_n $}}
	\psfrag{Infinite memory I}[lc][lc]{\scriptsize{Inf. memory $ \tilde{I}_n $}}
	\psfrag{Average r}[lc][lc]{\scriptsize{Average $ \rho $}}
	\subfigure[$ \frac{\dsis}{\bsis} = \frac{\Delta_b}{\Delta_r} = \frac{\lambda_{max}}{10} $.]{ \includegraphics[width=0.44\linewidth, height=0.2\textheight]{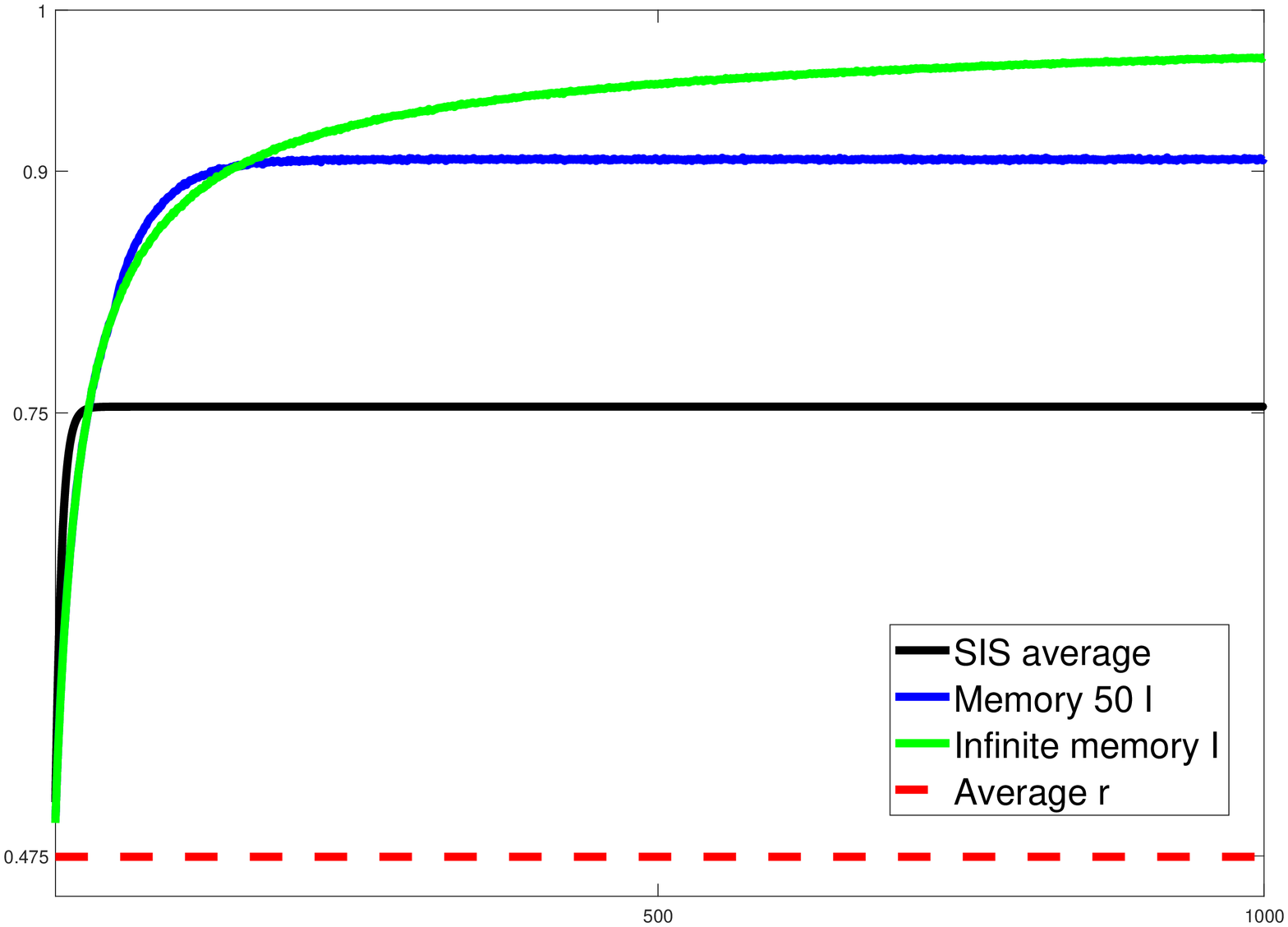} \label{subfig:SIS_threshold_low}}
}
\qquad
{
	\psfrag{500}[cc][cb]{\scriptsize{$ 500 $}}
	\psfrag{1000}[cc][cb]{\scriptsize{$ 1,000 $}}
	\psfrag{0.025}[rr][rr]{\scriptsize{$ 0.025 $}}
	\psfrag{0.475}[rr][rr]{\scriptsize{$ 0.475 $}}
	\psfrag{SIS average}[lc][lc]{\scriptsize{SIS average}}
	\psfrag{Memory 50 I}[lc][lc]{\scriptsize{Memory $ 50 $ $ \tilde{I}_n $}}
	\psfrag{Infinite memory I}[lc][lc]{\scriptsize{Inf. memory $ \tilde{I}_n $}}
	\psfrag{Average r}[lc][lc]{\scriptsize{Average $ \rho $}}
	\subfigure[$ \frac{\dsis}{\bsis} = \frac{\Delta_b}{\Delta_r} = 1.01\lambda_{max} $.]{ \includegraphics[width=0.44\linewidth, height=0.2\textheight]{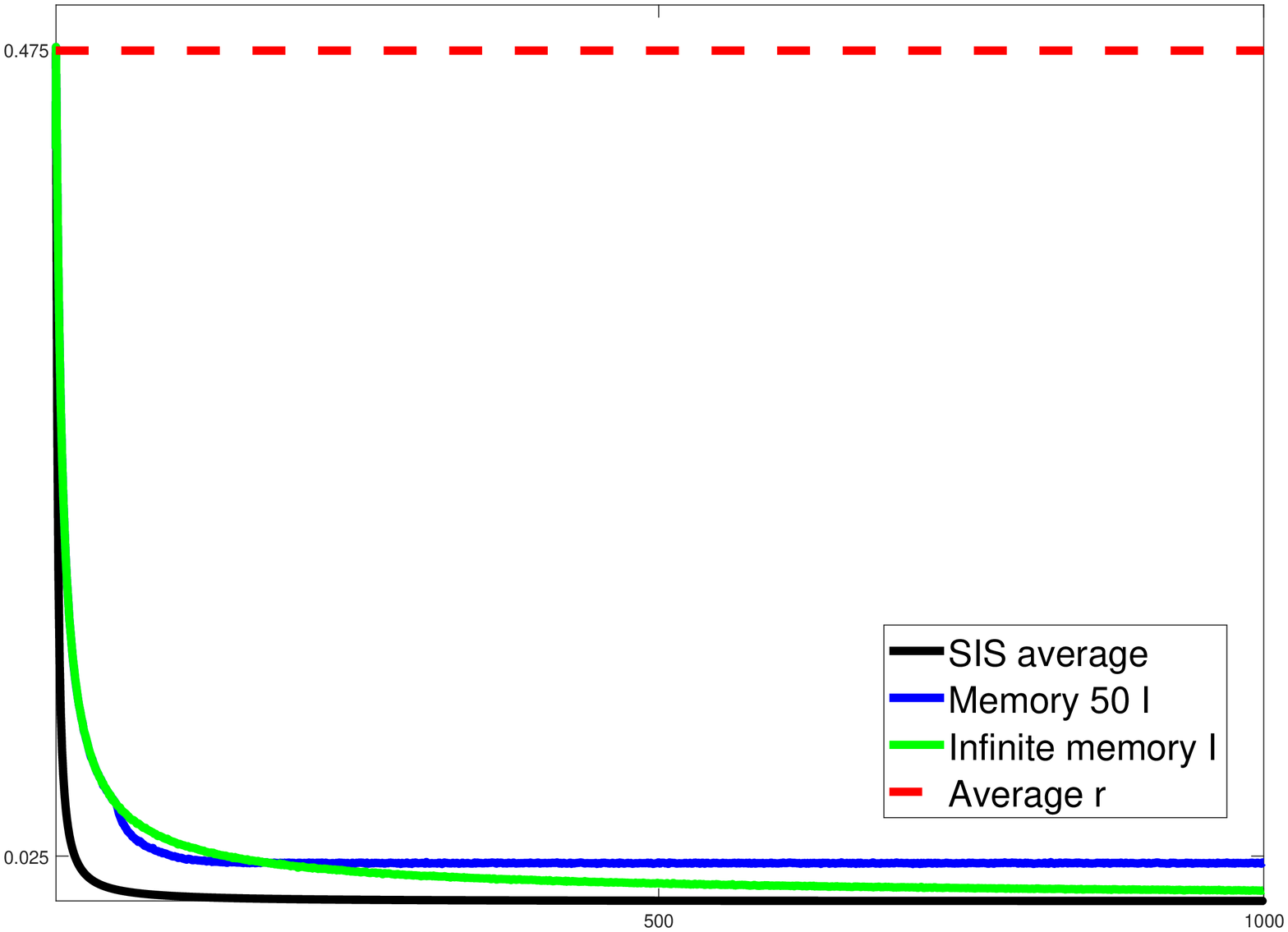} \label{subfig:SIS_threshold_met}}
}
\\
{
	\psfrag{500}[cc][cb]{\scriptsize{$ 500 $}}
	\psfrag{1000}[cc][cb]{\scriptsize{$ 1,000 $}}
	\psfrag{0.469}[rr][rr]{\scriptsize{$ 0.469 $}}
	\psfrag{0.475}[rr][rr]{\scriptsize{$ 0.475 $}}
	\psfrag{SIS average}[lc][lc]{\scriptsize{SIS average}}
	\psfrag{Memory 50 I}[lc][lc]{\scriptsize{Memory $ 50 $ $ \tilde{I}_n $}}
	\psfrag{Infinite memory I}[lc][lc]{\scriptsize{Inf. memory $ \tilde{I}_n $}}
	\psfrag{Average r}[lc][lc]{\scriptsize{Average $ \rho $}}
	\subfigure[$ \frac{\dsis}{\bsis} = \frac{\Delta_b}{\Delta_r} = 1 $ and $ \lambda_{max} > 1 $.]{ \includegraphics[width=0.44\linewidth, height=0.2\textheight]{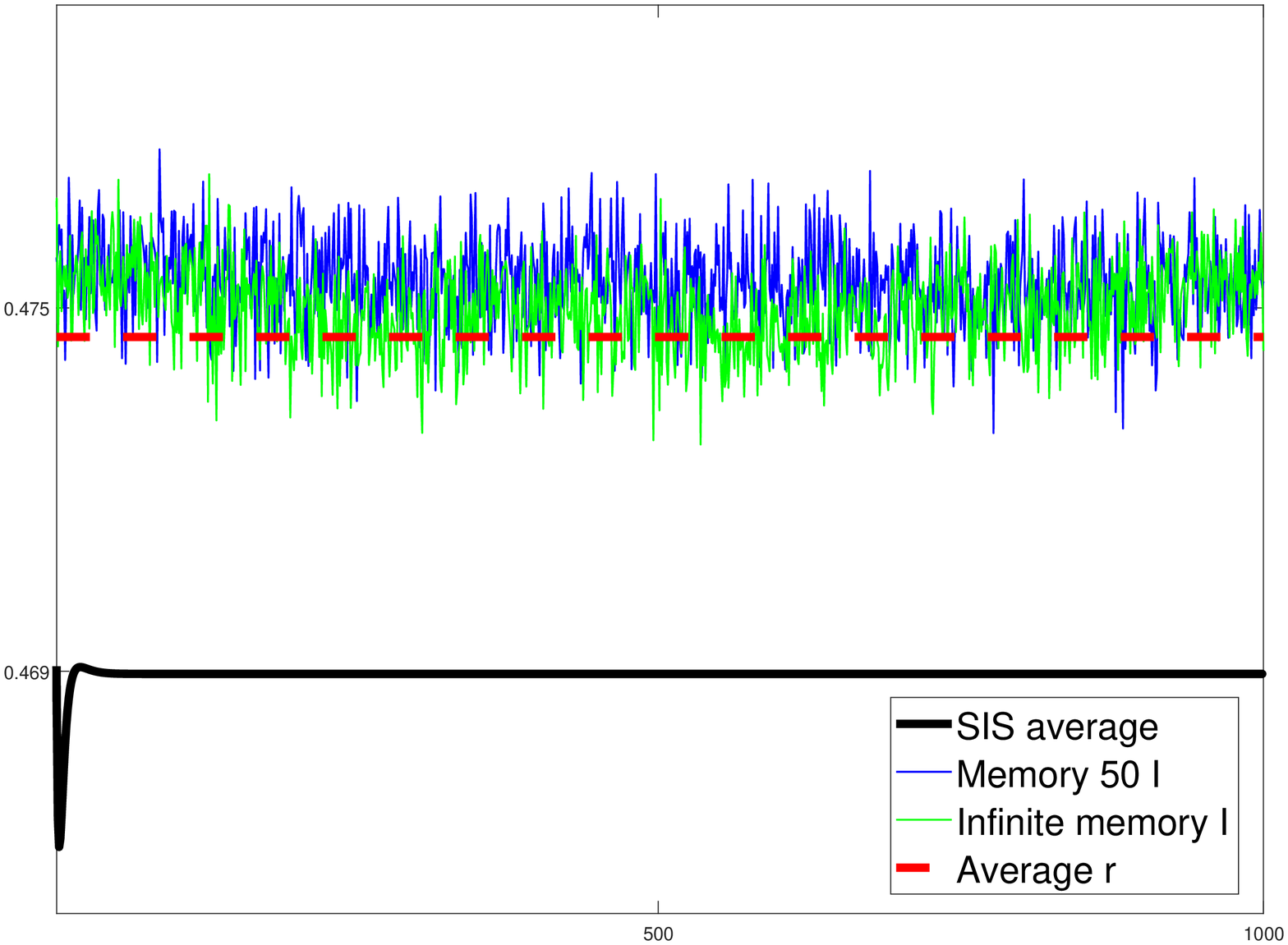} \label{subfig:SIS_ratio_same}}
}
\qquad
{
	\subfigure[100-node Barabasi-Albert network~\cite{RA-ALB:02}.]{ \includegraphics[width=0.44\linewidth, height=0.2\textheight]{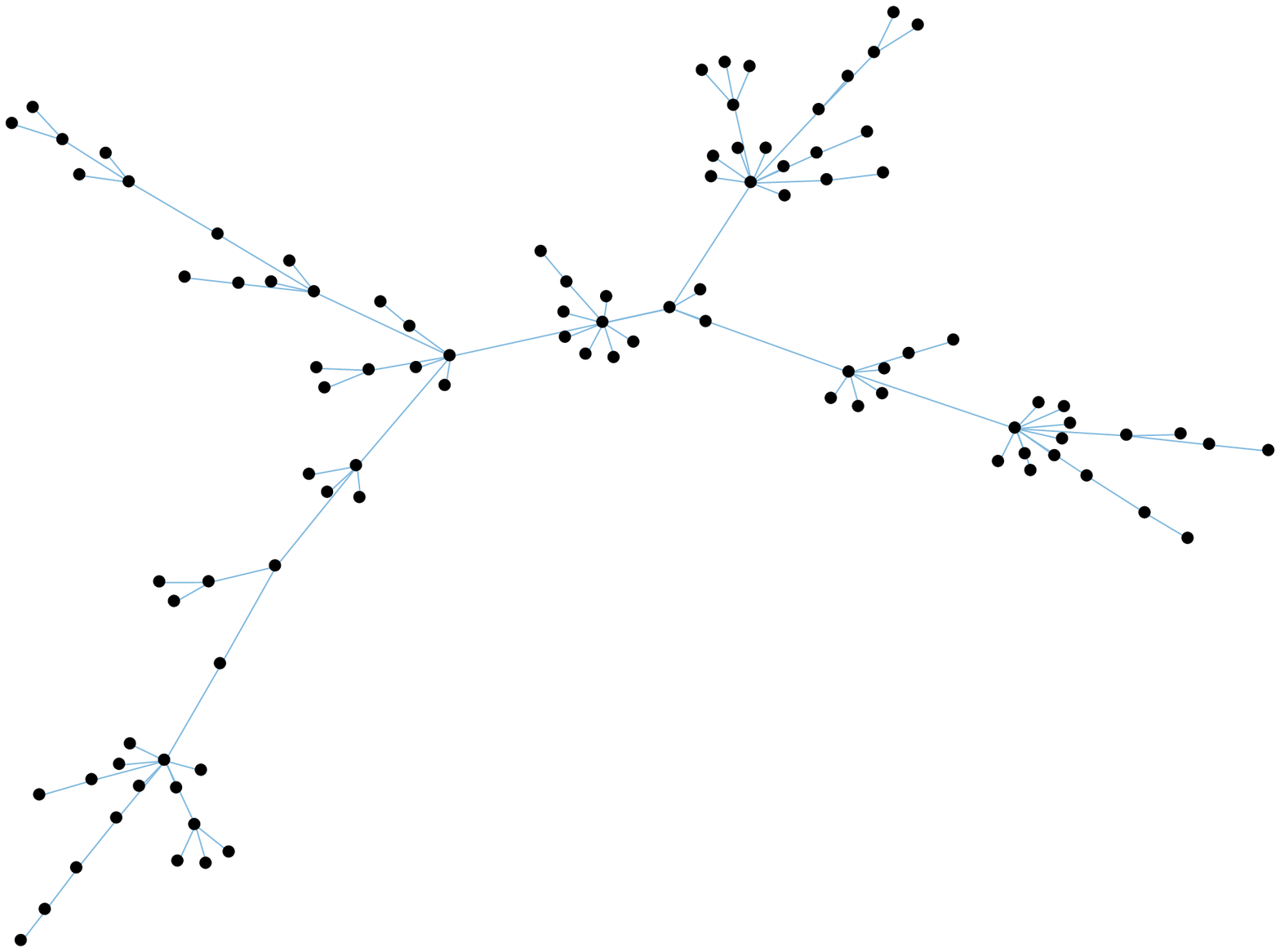} \label{subfig:SIS_network}}
}
\caption[Comparison with SIS model]{Comparison between discrete time SIS model average infection rate $ \frac{1}{N}\Nsum P_i(n) $ and the network Polya contagion process average infection rate $ \tilde{I}_n $. Simulation results were averaged over $5,000$ trials, and the initial parameters $R_i$ and $B_i$ for all nodes were randomly assigned but consistent throughout all trials for a given case. Here $ \lambda_{max} \approx 5.05 $, $ \bsis = 0.15 $ and $ \Delta_r = 2 $ for all cases, while $ \dsis $ and $ \Delta_b $ were set according to the ratios given above.}
\label{fig:SIS}
\vspace{-2.5mm}
\end{figure*}

Figure~\ref{fig:SIS} compares the behaviour of the SIS model and the network Polya contagion process for different selections of these parameters. The initial probabilities of infection $ P_i(0) $ for the SIS model were set to coincide with the initial individual proportions of red balls for the nodes $ \frac{R_i}{T_i} $. Further, we relate in Figures~\ref{subfig:SIS_threshold_low}--\ref{subfig:SIS_ratio_same} the parameters $ \bsis $ and $ \dsis $ to $ \Delta_r $ and $ \Delta_b $, respectively, using ratios of the largest-magnitude eigenvalue $ \lambda_{max} $ of the adjacency matrix of the graph shown in Figure~\ref{subfig:SIS_network}.

Figure~\ref{subfig:SIS_threshold_low} shows a comparison when the SIS model is displaying endemic behaviour. We see here that after a very short time, the SIS model settles and shortly thereafter the finite memory process settles (albeit to a different value), while for the infinite memory process the individual rates of infection and hence the average $ \tilde{I}_n $ continue to increase in time. Since both the SIS model and the finite memory process have limited reinforcement, while the infinite memory process does not, these results are to be expected. Figure~\ref{subfig:SIS_threshold_met} displays a comparison where the epidemic threshold is met and the epidemic dies out for the SIS model. Here we note that $ \tilde{I}_n $ for both the infinite and finite memory processes decreases and approaches zero, albeit not as quickly as the SIS model. Hence we observe that when the curing parameter $ \Delta_b $ is much larger (in fact, more than five folds larger) than the infection parameter $ \Delta_r $ the epidemic is eliminated, as we expect, and this behaviour of the SIS model is captured by the network Polya contagion process. However, the finite memory process does not fully approach zero, since the initial conditions $ R_i $ and $ B_i $ have a much larger influence relative to the infinite memory process. Finally, Figure~\ref{subfig:SIS_ratio_same} shows the case where the epidemic does not vanish and the parameters in both models are set to be equal ($ \dsis = \bsis $ and $ \Delta_b = \Delta_r $). We observe a similar trend between all models, with the finite and infinite memory processes exhibiting near-identical behaviour.

Through these observations, we may conclude that both versions of the network Polya contagion process may apply to the modelling of epidemics, albeit in different applications. The finite memory process exhibits behaviour that is more closely related to the SIS model since they are both limited reinforcement processes, and hence it may be best suited to traditional biological diseases. The infinite memory process obeys similar trends, but in the endemic state there are some interesting differences since the effects of the infection continue to spread throughout the population. On the other hand, the SIS model quickly settles and does not change in time. Thus with infinite memory our process is better suited to modelling opinion dynamics, the spread of ideas, and advertising schemes.

\color{black}

\section{Conclusion}\label{sec:conclusion}

We introduced a network epidemics model based on the classical Polya urn scheme, and we investigated its stochastic properties and asymptotic behaviour in detail. We showed that under certain conditions the proportion of red balls in individual urns and the network susceptibility, which are processes used to measure infection, admit limits. Three classical Polya processes were proposed, one computational and two analytical, to statistically approximate the contagion process of each node. Empirical results were presented which show that the approximations are a good fit for a range of system parameters. \bah{Our process was also compared empirically with the discrete-time SIS model, showing a similar behaviour, particularly in the finite memory mode, while providing different degrees of reinforcement in the endemic state, with the largest reinforcement occurring under the infinite memory mode.} Future directions of research include investigations into the curing of these processes, and the further study of the network contagion process with finite memory.

\bibliographystyle{ieeetr}
\bibliography{alias,Main-add,MH-add}%

\vspace*{-2.2\baselineskip plus -1fil}
\begin{IEEEbiography}[{\includegraphics[width=1in,height=1.25in,clip,keepaspectratio]{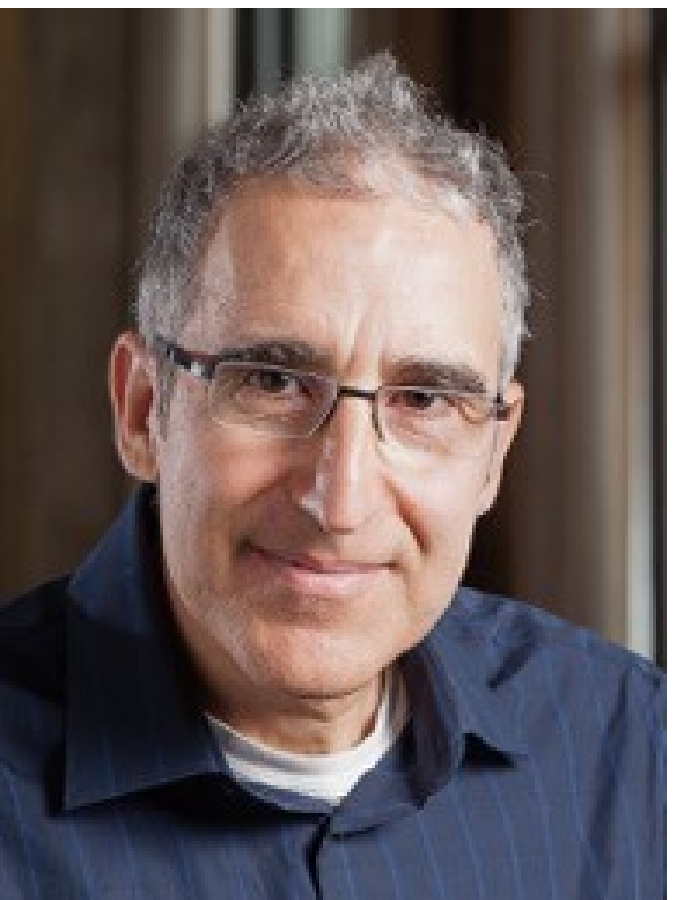}}]{Fady Alajaji}{\relax}
(S'90 - M'94 - SM'00) received the B.E. degree with distinction from the American University of Beirut, Lebanon, and the M.Sc. and Ph.D. degrees from the University of Maryland, College Park, all in electrical engineering, in 1988, 1990 and 1994, respectively. He then held a postdoctoral appointment in 1994 at the Institute for Systems Research, University of Maryland.

In 1995, he joined the Department of Mathematics and Statistics at Queen's University, Kingston, Ontario, where he is currently a Professor of Mathematics and Engineering. Since 1997, he has also held a cross-appointment with the Department of Electrical and Computer Engineering at the same university. In 2013-2014, he served as acting head of the Department of Mathematics and Statistics, and from 2003 to 2008, he served as chair of the Queen's Mathematics and Engineering program. His research interests include information theory, joint source-channel coding, error control coding, data compression and digital communications.

Dr. Alajaji served as Area Editor for Source-Channel Coding and Signal Processing from 2008 to 2015 and as Editor for Source and Source/Channel Coding from 2003 to 2012 for the IEEE Transactions on Communications. He served as organizer and Technical Program Committee member for several international conferences and workshops. He received the Premier's Research Excellence Award from the Province of Ontario in recognition for his research in ``the theory and practice of joint source-channel coding in telecommunication systems.''
\end{IEEEbiography}%

\begin{IEEEbiography}[{\includegraphics[width=1in,height=1.25in,trim={0.6in 0 0.6in 0},clip,keepaspectratio]{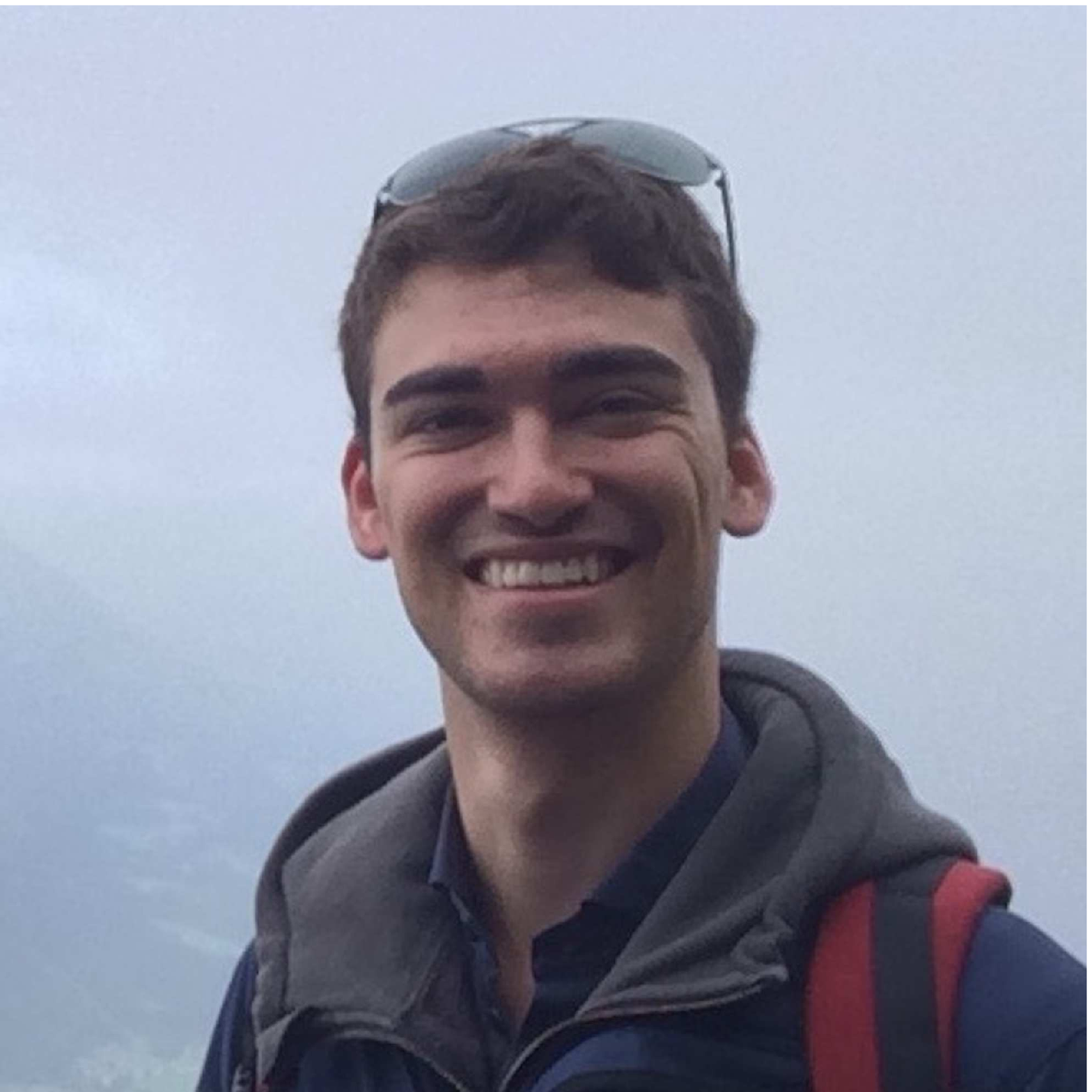}}]{Mikhail Hayhoe}
received the B.Sc. degree in Mathematics and Engineering in 2015 and the M.A.Sc. degree in Applied Mathematics in 2017, both from Queen’s University, Kingston, ON, Canada.

He is currently a Ph.D. student with the Department of Electrical and Systems Engineering at the University of Pennsylvania, Philadelphia, PA, USA. He was a finalist for the Student Best Paper Award at ACC 2017. His research interests include systems and control, optimization, social networks, information theory, big data, and machine learning.
\end{IEEEbiography}%

\vspace*{-2\baselineskip plus -1fil}

\begin{IEEEbiography}[{\includegraphics[width=1in,height=1.25in,clip,keepaspectratio]{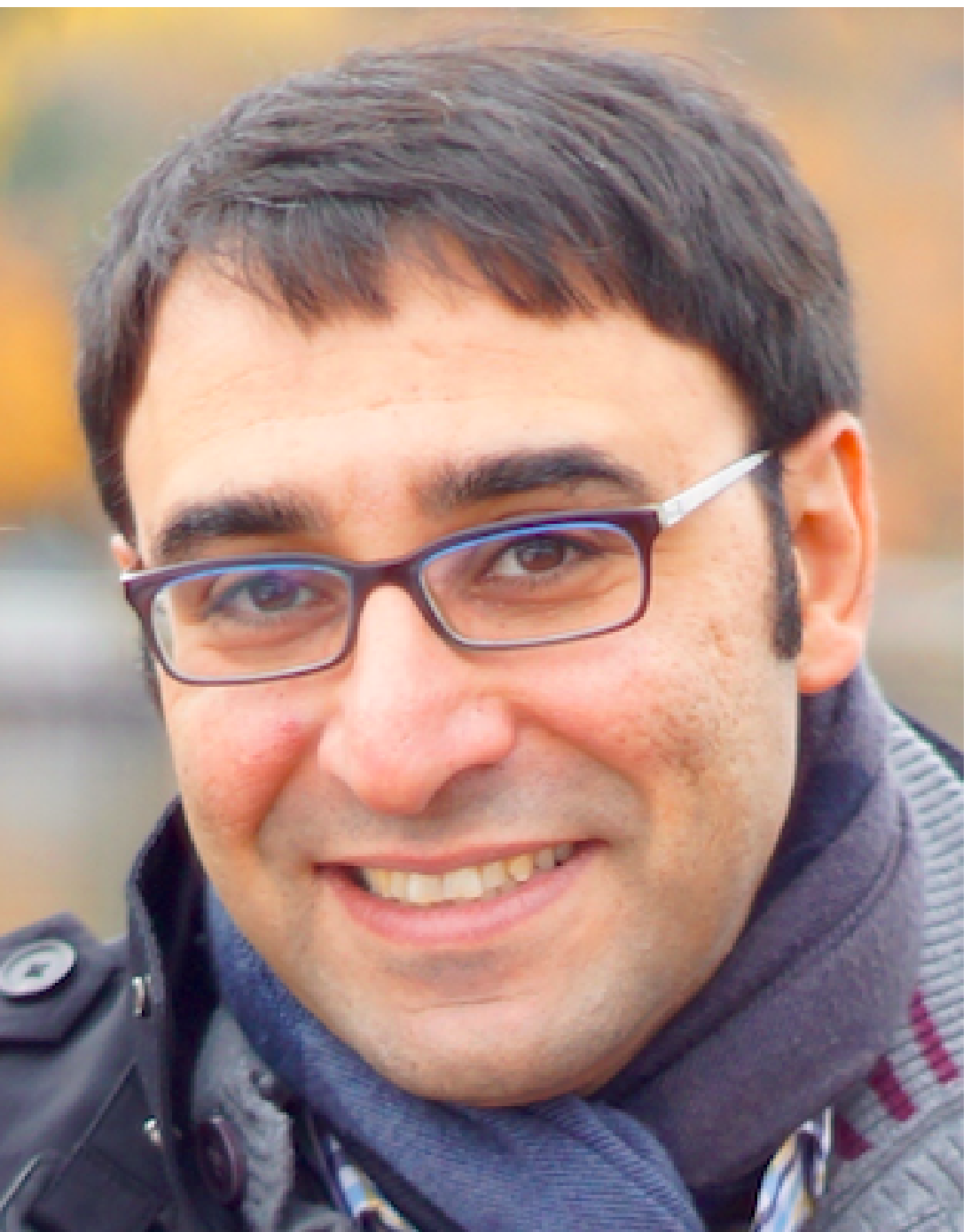}}]{Bahman Gharesifard}
(M'13) received the B.Sc. degree in Mechanical Engineering, in 2002, and the M.Sc. degree in Control and Dynamics, in 2005, from Shiraz University, Iran.  He received the Ph.D. degree in Mathematics, in 2009, from Queen's University, Canada. 

He held postdoctoral positions with the Department of Mechanical and Aerospace Engineering at University of California, San Diego 2009-2012 and with the Coordinated Science Laboratory at the University of Illinois at Urbana-Champaign from 2012-2013.  
He held a visiting faculty position at the Institute for Systems Theory and Automatic Control at the University of Stuttgart in summer of 2016. He is currently an Assistant Professor with the Department of Mathematics and Statistics at Queen's University. His research interests include systems and controls, distributed control and optimization, social and economic networks, game theory, geometric control and mechanics, and Riemannian geometry.
\end{IEEEbiography}
\vfill

\end{document}